\newcommand{\R}{\mathbb{R}}
\newcommand{\C}{\mathbb{C}}
\newcommand{\N}{\mathbb{N}}
\newcommand{\vx}{{\mathbf{x}}}
\newcommand{\vy}{{\mathbf{y}}}
\newcommand{\vt}{{\mathbf{t}}}
\newcommand{\Banach}{\mathscr{B}}
\newcommand{\ret}{{\rm ret}}
\newcommand{\sym}{{\rm sym}}
\newcommand{\free}{{\rm free}}
\newcommand{\be}{\begin{equation}}
\newcommand{\ee}{\end{equation}}
\DeclareMathOperator*{\esssup}{ess \, sup}
\newtheorem{theorem}{Theorem}[section]
\newenvironment{proof}[1][Proof:]{\begin{trivlist}
\item[\hskip \labelsep {\bfseries #1}]}{\end{trivlist}}
\newenvironment{remark}[1][Remark:]{\begin{trivlist}
\item[\hskip \labelsep {\bfseries #1}]}{\end{trivlist}}
\newcommand{\qed}{\hfill\ensuremath{\square}}
\title{A new class of Volterra-type integral equations\\ from relativistic quantum physics}
\author{
Matthias Lienert\footnote{Department of Mathematics,
     Rutgers University,
     110 Frelinghuysen Road, Piscataway, NJ 08854-8019, USA.
     E-mail: lienertmat@gmail.com}\ \ and
Roderich Tumulka\footnote{Fachbereich Mathematik, Eberhard-Karls-Universit\"at, Auf der Morgenstelle 10, 72076 T\"ubingen, Germany. E-mail: roderich.tumulka@uni-tuebingen.de}
}
\date{July 24, 2018}
\begin{document}

\maketitle

  \begin{abstract}
  	\noindent Here we study a new kind of linear integral equations for a relativistic quantum-mechanical two-particle wave function $\psi(x_1,x_2)$, where $x_1,x_2$ are spacetime points. In the case of retarded interaction, these integral equations are of Volterra-type in the in the time variables, i.e., they involve a time integration from 0 to $t_i = x_i^0,~i=1,2$. They are interesting not only in view of their applications in physics, but also because of the following mathematical features: (a) time and space variables are more interrelated than in normal time-dependent problems, (b) the integral kernels are singular, and the structure of these singularities is non-trivial, (c) they feature time delay. We formulate a number of examples of such equations for scalar wave functions and prove existence and uniqueness of solutions for them. We also point out open mathematical problems.
	
  	\vspace{0.3cm}
  	
  	\noindent \textbf{Keywords}: multi-dimensional Volterra integral equations, singular integral equations, time delay, multi-time wave functions, relativistic quantum mechanics, Bethe-Salpeter equation.
  \end{abstract}

\section{Introduction} \label{sec:intro}

The basic motivation of this paper is to present a new class of integral equations arising from relativistic quantum physics. The existence of solutions for these equations is not obvious, and we describe an approach that allows for proofs of existence and uniqueness results for simple examples from that class.

The straightforward extension of the concept of a quantum mechanical wave function to the relativistic case, due to Dirac \cite{dirac_32} (and in different form to Tomonaga \cite{tomonaga} and Schwinger \cite{schwinger}), uses wave functions that, for $N$ particles, are functions of $N$ space-time points, i.e.,
\be\label{psidef}
	\psi : \big(\R^{1,d} \big)^N \rightarrow \C^k,~~~(x_1,...,x_N) \mapsto \psi(x_1,...,x_N).
\ee
Here, $\R^{1,d}$ stands for (1+$d$)-dimensional Minkowski spacetime, $k \in \N$ depends on the type of particles described by $\psi$ and $x_i = (t_i,\vx_i) \in \R^{1,d}$ with $\vx_i \in \R^d$. Because of the occurrence of $N$ time coordinates $t_i$, $\psi$ has been termed a \textit{multi-time wave function} \cite{dice_paper}. The usual single-time wave function $\varphi(t,\vx_1,...,\vx_N)$ is straightforwardly contained in $\psi$ as the special case of equal times, i.e., $\varphi(t,\vx_1,...,\vx_N) = \psi(t,\vx_1,...,t,\vx_N)$, while $\psi$ is a manifestly Lorentz-covariant object.

Apart from being needed for Lorentz invariance, the $N$ time coordinates also make a new type of evolution equation possible that includes direct interactions between the particles. Since interaction cannot occur faster than light in a relativistic setting, these interactions need to have a retarded effect, with a time delay proportional to the distance; that is, the interaction should take place along light cones, as in the Wheeler--Feynman formulation of classical electromagnetism \cite{wf2}. As detailed in \cite{direct_interaction_quantum}, starting from the reformulation of the usual Schr\"odinger equation for $N=2$ as an integral equation, a natural generalization of the equation to the relativistic case leads to the following class of \textit{multi-time integral equations:}
\be
	\psi(x_1,x_2) = \psi^\free(x_1,x_2) + \lambda \int dV(x_1') \int dV(x_2') \, G_1(x_1-x_1') G_2(x_2-x_2') K(x_1',x_2') \psi(x_1',x_2').
	\label{eq:inteq}
\ee
Here, $\psi^\free(x_1,x_2)$ is a given solution of free (i.e., non-interacting) relativistic wave equations such as the Klein-Gordon (KG) equation or the Dirac equation in each spacetime variable $x_1, x_2$. We shall focus on the KG equation for which we have
\be
	(\square_i + m_i^2)\psi^\free(x_1,x_2) = 0,~~~i=1,2.
	\label{eq:freekgmultitime}
\ee
Here, $\square_i = \partial_{t_i}^2 - \Delta_i$ denotes the wave operator acting on $x_i$. Furthermore, $G_1, G_2$ are Green's functions of these equations, $\lambda \in \R$ is a coupling constant, $dV(x_i),~i=1,2$ are the (1+$d$)-dimensional spacetime volume elements, the integrals run over $\R^{1,d}$ and $K(x_1,x_2)$ is the so-called \textit{interaction kernel.} 

Eq.~\eqref{eq:inteq} defines the class of integral equations that this paper is about. We shall give more details in Sec.~\ref{sec:inteq} and refer to \cite{direct_interaction_quantum} for details about physical background and motivation.

Another source of motivation for studying \eqref{eq:inteq} is that similar equations can heuristically be derived from quantum field theory (QFT) for the description of bound states of two particles. In fact, the well-known Bethe-Salpeter (BS) equation \cite{bs_equation,greiner_qed} is of the form \eqref{eq:inteq} with a distribution-valued $K$.

So far, relativistic two-particle wave functions have almost exclusively been considered (a) in the non-interacting case when the task reduces to solving well-known free equations such as \eqref{eq:freekgmultitime}, and (b) in the interacting case with recourse to QFT (see \cite{schweber}). Even if one is only interested in the two-particle wave function $\psi(x_1,x_2)$, the QFT dynamics nevertheless involves a Fock space function, i.e., a collection of $n$-particle wave functions $\psi^{(n)}(x_1,...,x_n)$ for every $n \in \N$. Eq. \eqref{eq:inteq}, by contrast, only involves one function of eight variables instead of a set of infinitely many functions, each of $4n$ variables. Even more importantly, QFT typically faces the problem of ultra-violet divergences. That means, some of the expressions involved in QFT do not make sense when taken literally but are divergent \cite{folland}. As we shall demonstrate here, Eq. \eqref{eq:inteq}, by contrast, makes sense as it stands and leads to a well-posed initial value problem. Besides, our proof of the existence and uniqueness of solutions is based on an iteration scheme that might serve as the basis of numerical algorithms. Moreover, also the feature that \eqref{eq:inteq} allows to express \textit{direct interactions with time delay} is new compared to the existing approaches (apart from the BS equation, see \cite{direct_interaction_quantum} for a discussion).

In the mathematical literature, integral equations of the form \eqref{eq:inteq} have, to the best of our knowledge, not been systematically analyzed before.\footnote{Note that several works in the physics literature on the BS equation study (special) solutions of that equation (see e.g. \cite{wick_54,cutkosky54,green57,tiktopoulos65,consenza65,obrien75} and references therein). However, these works seem to be of limited significance for the mathematical theory of \eqref{eq:inteq}, for the following reasons. Several of the works use a Wick rotation \cite{wick_54}, i.e., they replace Minkowski spacetime with (1+$d$)-dimensional Euclidean space. This greatly simplifies the equation, as Green's functions of the equation $(-\Delta_{1+d} + m^2)\varphi = 0$, where $\Delta_d$ denotes the Laplacian in $d$ dimensions, are much simpler than Green's functions of the Klein-Gordon equation $(\partial_t^2 - \Delta_d + m^2)\varphi = 0$. However, a transformation back to the original equation is not attempted (and may well not always be possible). In addition, some of these works set $\psi^\free = 0$. The same step here would lead to only the trivial solution $\psi=0$. Moreover, these works study an eigenvalue problem of the form $\psi = \lambda \widehat{L} \psi$, where $\widehat{L}$ is an integral operator, in the coupling constant $\lambda$. As the physical value of the coupling constant is fixed, the results are only indirectly relevant for the actual problem. Lastly, the question of suitable initial data (or a different classification of solutions) is left untouched. As a consequence, these works have nothing to say about how to understand the BS equation as a law for defining the time evolution of $\psi$.} Several points make the task challenging:
\begin{enumerate}
	\item \textit{Non-trivial time dependence.} Because of the structure $G_1(x_1-x_1') G_2(x_2-x_2') K(x_1',x_2')$, integral transformations in the two time coordinates do not render the problem simple. In particular, the problem cannot easily be reduced to a time-independent one.
	\item \textit{Infinite domains.} The domain of integration occurring in \eqref{eq:inteq} is $\R^{1,d}\times \R^{1,d}$. That means, in order for the integral to exist, the integrand needs to have a certain drop-off behavior at infinity, e.g., has to be in $L^1$. However, as the Green's functions of the relevant wave equations do not decay particularly fast, $\psi$ needs to provide this drop-off behavior. This is problematic, as the integral operator then maps out of $L^1$, and it becomes hard to even set up a suitable mathematical framework. (We shall illustrate this problem in detail in Sec. \ref{sec:problem}.)
	\item \textit{Combined singularities of the Green's functions.} Green's functions of relativistic wave equations are typically singular; for example, they often contain Dirac $\delta$-functions on the light cone. If in addition $K$ is singular as well, which is the case for the physically natural choice in 1+3 dimensions, $K(x_1,x_2)=\delta((t_1-t_2)^2-|\vx_1-\vx_2|^2)$ \cite{direct_interaction_quantum}, then the structure of the combined singularities of $G_1, G_2$ and $K$ becomes particularly difficult to treat.
\end{enumerate}

We shall address these problems as follows. We focus on the case that $G_1,G_2$ are Green's functions of the Klein-Gordon equation. In order to formulate a tractable class of models, we set aside item 3 (returning to it later) and assume that $K$ is bounded. At least in $d=1$, this assumption also turns out to be physically realistic. With regard to item 2, we note that the infinite domain $\R^{1,d}\times \R^{1,d}$ is not the only physically reasonable possibility. Cosmologists take seriously the possibility that our universe had a Big Bang and thus is only semi-infinite in time. To keep the discussion simple, we implement this beginning in time in a rather crude way, cutting off Minkowski spacetime before $t=0$. (The case of curved cosmological spacetimes which actually feature a Big Bang singularity is studied in a separate paper \cite{int_eq_curved}.)
Focusing on the case of \emph{retarded} Green's functions (i.e., $G(x_1-x_1')$ that are nonzero only for $x_1'$ on or within the past light cone of $x_1$; a common choice in physics with reference to causality) then renders the time integrals finite, and leads to a Volterra-type structure of the equations. In fact, the whole domain of integration in \eqref{eq:inteq} becomes finite.
These simplifications then permit us to deal with item 1, the non-trivial time dependence.\\

The paper is structured as follows.
First, we formulate the integral equation in full detail on Minkowski spacetime for the relevant space dimensions $d=1,2,3$ (Sec. \ref{sec:inteq}). At the example of $d=1$, we illustrate the above-mentioned problem of infinite domains (Sec. \ref{sec:problem}). This motivates us to formulate the integral equation on semi-infinite (1+$d$)-dimensional Minkowski spacetime (Sec. \ref{sec:simplifiedmodel}).

Sec. \ref{sec:results} is dedicated to the question of the existence and uniqueness of solutions of the identified models. In Sec. \ref{sec:banachspaces} we point out which Banach spaces seem appropriate for the physical problem. Then, in Sec. \ref{sec:l2kernels}, we connect to standard results about multi-time Volterra integral equations by proving an existence and uniqueness theorem (Thm. \ref{thm:l2kernels}) for arbitrary $L^\infty L^2$-kernels ($L^\infty$ in the times and $L^2$ in the space variables). The proof is rather standard but serves to recall classical arguments and to prepare the method of the later proofs. Noting that the kernels in our models are not $L^\infty L^2$-kernels, we turn to the case of realistic Green's functions $G$ and bounded interaction kernels $K(x_1,x_2)$. (The total integral kernel is still singular.)  Section \ref{sec:boundedkernels} contains our main results: existence and uniqueness theorems for the simplified models of Sec. \ref{sec:simplifiedmodel} (Thms. \ref{thm:1dboundedkernel}-\ref{thm:3dboundedkernel}). The proofs make crucial use of the fact that the retarded Green's functions of relativistic wave equations are only supported on (and possibly within) past light cones. In Sec. \ref{sec:singularkernels}, we finally extend our method to certain (special) singular interaction kernels which are simplified compared to the physically natural cases (Thm.s \ref{thm:singularkernel3d}, \ref{thm:singularkernel2d}).

In Sec. \ref{sec:conclusion}, we conclude and put our results in perspective. Moreover, we point out open problems that may be of interest to researchers specializing in integral equations.

\section{The integral equation} \label{sec:inteq}

We now make explicit the physically relevant form of our integral equation~\eqref{eq:inteq} in $d=1,2,3$ space dimensions.

\subsection{Explicit form of the Green's functions and the integral equation} \label{sec:explicitform}

The integral equation \eqref{eq:inteq} becomes fully specified by the choices of $G_1, G_2$ and $K$. Here we focus on the case that $\psi$ is complex-valued [i.e., $k=1$ in \eqref{psidef}] and $G_1, G_2$ are retarded Green's functions of the Klein-Gordon (KG) equation, i.e., for $j=1,2$,
\be
	(\square_j + m_j^2)G_j^\ret(x_j) = \delta^{(1+d)}(x_j),
\ee
and $G_j^\ret(t_j,\vx_j) = 0$ for $t_j <0$. Here, $m_j \geq 0$ is the $j$-th particle's mass and $\delta^{(1+d)}$ denotes the (1+$d$)-dimensional delta function.

In dimensions $d=1,2,3$, the retarded Green's functions with mass $m=m_1$ or $m=m_2$ are given as follows (see \cite[chap. 7.4]{zauderer}, \cite[appendix E]{birula_qed}). We use the abbreviation $x^2 = (x^0)^2-|\vx|^2$ for $x = (x^0,\vx) \in \R^{1,d}$ (Minkowski square) and set the physical constants $c$ and $\hbar$ to unity.
\begin{itemize}
	\item[] $d=1$: $G^\ret(x) = \frac{1}{2} H(x^0) H(x^2) J_0(m \sqrt{x^2})$,
	\item[] $d=2$: $G^\ret(x) = \frac{1}{2\pi} H(x^0) H(x^2) \frac{\cos(m\sqrt{x^2})}{\sqrt{x^2}}$,
	\item[] $d=3$: $G^\ret(x) = \frac{1}{2\pi} H(x^0) \delta(x^2) - \frac{m}{4\pi \sqrt{x^2}} H(x^0) H(x^2) J_1(m\sqrt{x^2})$.
\end{itemize}
Here, $H(s)=1_{s>0}$ denotes the Heaviside function and $J_0, J_1$ are Bessel functions of the first kind of order 0 and 1, respectively.

As detailed in \cite{direct_interaction_quantum}, the physically natural choice of the interaction kernel is  $K(x_1,x_2) = G^\sym(x_1-x_2)$, the time-symmetric Green's function of the wave equation (i.e., the massless KG equation). We have:
\begin{itemize}
	\item[] $d=1$: $G^\sym(x) = \frac{1}{2} H(x^2)$,
	\item[] $d=2$: $G^\sym(x) = \frac{1}{2\pi} \frac{H(x^2)}{\sqrt{x^2}}$,
	\item[] $d=3$: $G^\sym(x) = \frac{1}{2\pi} \delta(x^2)$.
\end{itemize}
With these choices, the integral equation \eqref{eq:inteq} in the various space dimensions becomes:

\paragraph{d=1:}
\begin{align}
	&\psi(t_1,z_1,t_2,z_2) = \psi^\free(t_1,z_1,t_2,z_2) + \frac{\lambda}{8} \int_{-\infty}^{t_1} dt_1' \int dz_1' \int_{-\infty}^{t_2} dt_2' \int dz_2'~ H(t_1-t_1'-|z_1-z_1'|)\nonumber\\
&~\times~ J_0\Big(m_1\sqrt{(t_1-t_1')^2-|z_1-z_1'|^2}\Big) \, H(t_2-t_2'-|z_2-z_2'|) \, J_0\Big(m_2\sqrt{(t_2-t_2')^2-|z_2-z_2'|^2}\Big)\nonumber\\
&~\times~ H((t_1'-t_2')^2-|z_1'-z_2'|^2) \, \psi(t_1',z_1',t_2',z_2').
\label{eq:inteq1d}
\end{align}

\paragraph{d=2:}
\begin{align}
	&\psi(t_1,\vx_1,t_2,\vx_2) = \psi^\free(t_1,\vx_1,t_2,\vx_2) + \frac{\lambda}{(2\pi)^3} \int_{-\infty}^{t_1} dt_1' \int d^2 \vx_1' \int_{-\infty}^{t_2} dt_2' \int d^2 \vx_2'~\nonumber\\
& ~\times~  H(t_1-t_1'-|\vx_1-\vx_1'|)\, \frac{\cos\big(m_1\sqrt{(t_1-t_1')^2-|\vx_1-\vx_1'|^2}\big)}{\sqrt{(t_1-t_1')^2-|\vx_1-\vx_1'|^2}} \,
H(t_2-t_2'-|\vx_2-\vx_2'|) \, \nonumber\\
&~\times~  \frac{\cos\big(m_2\sqrt{(t_2-t_2')^2-|\vx_2-\vx_2'|^2}\big)}{\sqrt{(t_2-t_2')^2-|\vx_2-\vx_2'|^2}}\, \frac{H((t_1'-t_2')^2-|\vx_1'-\vx_2'|^2) }{\sqrt{(t_1'-t_2')^2-|\vx_1'-\vx_2'|^2}}\, \psi(t_1',\vx_1',t_2',\vx_2').
\label{eq:inteq2d}
\end{align}

\paragraph{d=3:} For simplicity, we consider only the massless case ($m_1=m_2=0$). Then the most singular terms of the Green's functions are still included, and the equation takes the form
\begin{align}
	&\psi(t_1,\vx_1,t_2,\vx_2) = \psi^\free(t_1,\vx_1,t_2,\vx_2) + \frac{2\lambda}{(4\pi)^3} \int_{-\infty}^{t_1} dt_1' \int d^3 \vx_1' \int_{-\infty}^{t_2} dt_2' \int d^3 \vx_2'~  \nonumber\\
& ~\times~\frac{\delta(t_1-t_1'-|\vx_1-\vx_1'|)}{|\vx_1-\vx_1'|}\, \frac{\delta(t_2-t_2'-|\vx_2-\vx_2'|)}{|\vx_2-\vx_2'|} \, \delta((t_1'-t_2')^2-|\vx_1'-\vx_2'|^2) \, \psi(t_1',\vx_1',t_2',\vx_2').
\label{eq:inteq3d}
\end{align}

The form of the equations for the different dimensions is quite different, both with respect to the domain of integration and with respect to the singularities that occur. In $d=1$ and $d=2$, the domain of integration are the time-like configurations, i.e., the set
\be
	\mathcal{T} = \{ (t_1,\vx_1,t_2,\vx_2) \in \R^{1,d} \times \R^{1,d} : |t_1-t_2| > |\vx_1-\vx_2|\}.
\ee
In $d=3$, however, because of the delta function $\delta((t_1'-t_2')^2-|\vx_1'-\vx_2'|^2)$ in the interaction kernel, the integral in \eqref{eq:inteq3d} runs only along the light-like configurations,
\be
	\mathscr{L} =  \{ (t_1,\vx_1,t_2,\vx_2) \in \R^{1,d} \times \R^{1,d} : |t_1-t_2| = |\vx_1-\vx_2|\}.
\ee
As noted in \cite{direct_interaction_quantum}, \eqref{eq:inteq3d} can be solved on $\mathscr{L}$ autonomously. For a configuration outside of $\mathscr{L}$, it can then be used as a formula to calculate the solution.

Concerning the singularities, there are only jump singularities in $d=1$ and the whole integral kernel remains bounded. In $d=2$, there are three connected singularities of the form $1/\sqrt{t^2-\vx^2}$. Finally, in $d=3$, there are singularities of the form $1/|\vx|$ and also $\delta$-functions which require some care to be defined rigorously, and which may lead to further singularities because of the weight factor associated with the roots of their arguments.

These connected singularities may be quite hard to treat in $d=2, 3$. However, because in all cases the domains extend infinitely in the time direction, there is a more basic problem we have to deal with first. We shall illustrate this in the case $d=1$ where the singularities are unproblematic.

\subsection{Difficulties with infinite time integrations} \label{sec:problem}

Consider the integral equation \eqref{eq:inteq1d} in $d=1$. The well-posedness of the problem at the very least requires the integral to exist. As $|J_0| \leq 1$ and $J_0(0) = 1$, the existence of the integral in the massless case implies the existence for every $m_1,m_2>0$. So we focus on $m_1=m_2=0$. Then we obtain the following condition on $\psi$:
\begin{multline}
	(\widehat L \psi)(t_1,z_1,t_1',z_1'):= \int_{-\infty}^{t_1} dt_1' \int dz_1' \int_{-\infty}^{t_2} dt_2' \int dz_2'~ H(t_1-t_1'-|z_1-z_1'|) \\[2mm]
\times~ H(t_2-t_2'-|z_2-z_2'|) \, H((t_1'-t_2')^2-|z_1'-z_2'|^2)\, |\psi|(t_1',z_1',t_2',z_2') < \infty
	\label{eq:integral1d}
\end{multline}
for all $t_1,t_2, z_1,z_2$. This means, $\psi$ needs to have certain integrability properties which are related to its behavior for $t_1,t_2 \rightarrow \pm \infty$. As only configurations $(t_1',z_1',t_2',z_2') \in \mathcal{T}$ contribute to the integral, a natural possibility is to demand $\psi \in L^1(\mathcal{T})$. Then the integral \eqref{eq:integral1d} is finite.

However, in order to formulate the equation \eqref{eq:inteq1d} on $\psi \in L^1(\mathcal{T})$, we also need that the integral operator $\widehat{L}$ occurring in the equation maps from $L^1(\mathcal{T})$ to $L^1(\mathcal{T})$. This yields the further condition:
\begin{align}
	&\hspace{-5mm}\int dt_1 \int_{-\infty}^{t_1} dt_1' \int dt_2 \int_{-\infty}^{t_2} dt_2'  \int dz_1\, dz_1' \, dz_2 \, dz_2' ~ H(t_1-t_1'-|z_1-z_1'|) \, H(t_2-t_2'-|z_2-z_2'|)\nonumber\\
\times~ &H((t_1-t_2)^2-|z_1-z_2|^2)\, H((t_1'-t_2')^2-|z_1'-z_2'|^2)\, |\psi|(t_1',z_1',t_2',z_2') < \infty
	\label{eq:doubleintegral1d}
\end{align}
for all $\psi \in L^1(\mathcal{T})$.
The point now is that the integral \eqref{eq:doubleintegral1d} simply diverges. This can, for example, be seen from the fact that arbitrarily large $t_1,t_2$ contribute to the integral.
Thus, it seems difficult to even start the mathematical analysis of \eqref{eq:inteq1d}. A similar problem also occurs in the massive case and in dimensions $d=2,3$ as well.

While we do not claim that it is generally impossible to analyze the integral equations on domains which are infinite in the times, we have not found any other way to deal with the problem besides modifying the equation. The root of the problem obviously lies in the fact that the domain of integration is infinite in time. In the retarded case, the domain extends to $-\infty$ instead of stopping at some finite value. The easiest remedy is to assume that spacetime does not extend back to $t\to -\infty$ but had an initial time which thus becomes a lower bound of the time integrations. This renders the integral \eqref{eq:integral1d} finite without demanding some kind of drop-off behavior of $\psi$ in time (e.g., for $\psi \in L^\infty(\R^4)$). Of course, the cutoff requires a justification from physics, as it breaks important symmetries such as time translation invariance (and also Lorentz invariance).

Fortunately, there is such a physical justification. Cosmology has come to the conclusion that it is not unlikely that our universe has a Big Bang singularity, i.e., a beginning in time. To implement the Big Bang properly requires to formulate the integral equation \eqref{eq:inteq} on curved spacetimes. This is a non-trivial task by itself and is the topic of a separate paper \cite{int_eq_curved}. Among other things, one needs to explicitly determine the Green's functions of the appropriate quantum mechanical wave equations on the respective curved spacetimes. Here we set aside these complications, content ourselves that there is a physical reason for a beginning in time, and simply cut off the time integrals in \eqref{eq:inteq} at $t_1=t_2=0$.

\subsection{Simplified models} \label{sec:simplifiedmodel}

The cutoff at $t_1=t_2=0$ gets rid of the problem of infinite time domains. However, there is another problem remaining (for $d=2,3$): the connected singularities of $G_1,G_2$ and $K$. We shall deal with this problem as follows. The Green's functions $G_1,G_2$ cannot be modified as they are determined by the type of quantum mechanical particle under consideration. The interaction kernel $K$ is more arbitrary. There is a most natural choice for physics, $K(x_1,x_2) = G^\sym(x_1-x_2)$, but other choices just lead to a different kind of interaction. In particular, we can approximate $G^\sym(x_1-x_2)$ by a regular function while respecting the physical symmetries. For example, for $d=3$, $G^\sym(x_1-x_2) = \frac{1}{2\pi} \delta((x_1-x_2)^2)$ can be approximated by
\be
	K(x_1,x_2) = \frac{1}{ (2\pi)^{3/2} \sigma} \exp\left(- \frac{|(x_1-x_2)^2|^2}{2 \sigma^2} \right)
\ee
for some small constant $\sigma >0$.

In the following, we shall therefore study models with arbitrary but regular (e.g., bounded or bounded and smooth) interaction kernels $K$. We return to the case of singular interaction kernels in Sec. \ref{sec:singularkernels}.

The simplified models we shall study are given as follows.
\paragraph{d=1:} Here, the natural interaction kernel is bounded. We nevertheless allow for arbitrary bounded $K(x_1,x_2):$
\begin{align}
	&\psi(t_1,z_1,t_2,z_2) = \psi^\free(t_1,z_1,t_2,z_2) + \frac{\lambda}{4} \int_0^{t_1} dt_1' \int_0^{t_2} dt_2' \int dz_1' \, dz_2'~ H(t_1-t_1'-|z_1-z_1'|)\nonumber\\
&~\times~ J_0\Big(m_1\sqrt{(t_1-t_1')^2-|z_1-z_1'|^2}\Big) \, H(t_2-t_2'-|z_2-z_2'|) \, J_0\Big(m_2\sqrt{(t_2-t_2')^2-|z_2-z_2'|^2}\Big)\nonumber\\[1mm]
&~\times~ K(t_1',z_1',t_2',z_2') \, \psi(t_1',z_1',t_2',z_2').
\label{eq:inteq1dsimplified}
\end{align}

\paragraph{d=2:}
\begin{align}
	&\psi(t_1,\vx_1,t_2,\vx_2) = \psi^\free(t_1,\vx_1,t_2,\vx_2) + \frac{\lambda}{(2\pi)^2} \int_0^{t_1} dt_1' \int_0^{t_2} dt_2' \int d^2 \vx_1'\, d^2 \vx_2'~\nonumber\\
& ~\times~ H(t_1-t_1'-|\vx_1-\vx_1'|)\, \frac{\cos\big(m_1\sqrt{(t_1-t_1')^2-|\vx_1-\vx_1'|^2}\big)}{\sqrt{(t_1-t_1')^2-|\vx_1-\vx_1'|^2}} \,
H(t_2-t_2'-|\vx_2-\vx_2'|) \, \nonumber\\
&~\times~ \frac{\cos\big(m_2\sqrt{(t_2-t_2')^2-|\vx_2-\vx_2'|^2}\big)}{\sqrt{(t_2-t_2')^2-|\vx_2-\vx_2'|^2}}\, K(t_1',\vx_1',t_2',\vx_2') \,  \psi(t_1',\vx_1',t_2',\vx_2').
\label{eq:inteq2dsimplified}
\end{align}

\paragraph{d=3:} We again consider the case $m_1=m_2=0$ here and let $K(x_1,x_2)$ be smooth and bounded. If $\psi$ is a test function, this permits us to perform the time integrals by using the $\delta$-functions. We are left with:
\begin{multline}
	\hspace{-3mm}\psi(t_1,\vx_1,t_2,\vx_2) = \psi^\free(t_1,\vx_1,t_2,\vx_2) + \frac{\lambda}{(4\pi)^2}  \int d^3 \vx_1' \, d^3 \vx_2'~  \frac{H(t_1-|\vx_1-\vx_1'|)}{|\vx_1-\vx_1'|}\, \frac{H(t_2-|\vx_2-\vx_2'|)}{|\vx_2-\vx_2'|} \\[2mm]
 \times~ K(t_1-|\vx_1-\vx_1'|,\vx_1',t_2-|\vx_2-\vx_2'|,\vx_2') \,  \psi(t_1-|\vx_1-\vx_1'|,\vx_1',t_2-|\vx_2-\vx_2'|,\vx_2').
\label{eq:inteq3dsimplified}
\end{multline}
The Heaviside functions result from the lower bounds $t_1',t_2' \geq 0$ in the modified integral equation \eqref{eq:inteq3d}. In order to avoid complications with distribution-valued kernels, we directly study \eqref{eq:inteq3dsimplified}.

\paragraph{Remarks.}
\begin{enumerate}
	\item For all three equations \eqref{eq:inteq1dsimplified}-\eqref{eq:inteq3dsimplified}, the domain of integration is now effectively finite, as it is finite in the time directions and as the Green's functions $G_1^\ret$, $G_2^\ret$ are only supported along (and possibly inside) the backward light cones ${\rm PLC}(x_i) = \{ y \in \R^{1,d} : |x_i^0-y^0| > |\vx_i-\vy| ~{\rm and}~ y^0 < x^0_i\}$.
	\item For $d=1$ and $d=2$, the time integrals run from $0$ to $t_i$, $i=1,2$. That means, \eqref{eq:inteq1dsimplified} and \eqref{eq:inteq2dsimplified} have a multi-dimensional Volterra-type structure. We shall therefore call these equations \textit{multi-time Volterra integral equations} (MTVE). For $d=3$, the structure is somewhat different. However, we can also see that the radial coordinates $|\vx_i|$ can only take values between $0$ and $t_i$. This will also allow us to employ methods for Volterra integral equations for $d=3$.
	\item From the integral equations \eqref{eq:inteq1dsimplified}-\eqref{eq:inteq3dsimplified} we can read off that $\psi$ satisfies the initial value problem $\psi(0,\vx_1,0,\vx_2) = \psi^\free(0,\vx_1,0,\vx_2)$. In other words, $\psi$ is subject to a \textit{Cauchy problem ``at the Big Bang.''} As $\psi^\free$ is a solution of the free multi-time equations, here $(\square_k + m^2_k) \psi = 0,~k=1,2$, it is itself determined uniquely by Cauchy data. Thus, if we can prove the existence and uniqueness of solutions for arbitrary $\psi^\free$, we also obtain a classification of the solutions by Cauchy data at $t_1 = 0 = t_2$. For a multi-time integral equation \eqref{eq:inteq} on a domain which has no lower bounds in the times, the relation between $\psi^\free$ and initial values for $\psi$ is not as clear (see the discussion in \cite{direct_interaction_quantum}).
\end{enumerate}

\section{Results} \label{sec:results}

In the following, we prove a number of existence and uniqueness theorems for MTVEs. In Sec. \ref{sec:banachspaces} we discuss which Banach spaces seem appropriate for physics. In Sec. \ref{sec:l2kernels}, we pick up work in the literature on multi-dimensional Volterra integral equations and prove an existence and uniqueness result for general $L^\infty_{t_1,t_2} L^2_{\vx_1,\vx_2}$ kernels. We claim no originality for this result; however, the proof is useful to connect with classical results and to introduce the strategy of the following proofs. In Sec. \ref{sec:boundedkernels}, we turn to the existence and uniqueness proofs for Eqs. \eqref{eq:inteq1dsimplified}-\eqref{eq:inteq3dsimplified} with bounded interaction kernels. These theorems constitute our main results; they are not special cases of the general theorem in Sec. \ref{sec:l2kernels}. In Sec. \ref{sec:singularkernels}, we finally return to the case of singular interaction kernels, and show that the method developed in Sec. \ref{sec:boundedkernels} is sufficient to at least treat certain singular interaction kernels.

\subsection{Banach space} \label{sec:banachspaces}

We shall consider the integral equations \eqref{eq:inteq1dsimplified}-\eqref{eq:inteq3dsimplified} as linear operator equations on a suitable Banach space $\Banach$:
\be
    \psi = \psi^\free + \widehat{L}\psi,
\ee
where $\widehat{L}$ is the integral operator occurring in the respective equation.

Which space should $\Banach$ be? As $\psi$ is a quantum-mechanical wave function, there are some expectations about $\Banach$. In non-relativistic quantum mechanics, the single-time wave function $\varphi(t,\vx_1,\vx_2)$ represents a probability amplitude; hence, it has to be square integrable in the space variables $\vx_1,\vx_2$ for every fixed time $t$, and the $L^2$-norm is constant in time. This suggests choosing $\Banach$ to be the following Bochner space:
\be
    \Banach_d := L^\infty \big([0,T]^2_{(t_1,t_2)}, L^2(\R^{2d}_{(\vx_1,\vx_2)}) \big),
    \label{eq:banach}
\ee
where $T>0$ is an arbitrary constant, with the norm of $\psi \in \Banach_d$ given by
\be
    \| \psi\|_{\Banach_d} = \esssup_{t_1,t_2 \in [0,T]} \| \psi(t_1,\cdot,t_2,\cdot) \|_{L^2} \:.
    \label{eq:norm}
\ee
In fact, for solutions $\psi$ of \eqref{eq:inteq} or of the free Klein-Gordon equation, $|\psi|^2$ cannot be expected to represent a probability density, nor $\| \psi(t_1,\cdot,t_2,\cdot) \|_{L^2}$ to be independent of $t_1$ and $t_2$. Still, the choice \eqref{eq:banach} will turn out useful for our proofs.

\subsection{General $L^\infty_{t_1,t_2} L^2_{\vx_1,\vx_2}$-kernels} \label{sec:l2kernels}

Multi-dimensional Volterra integral equations have been treated in the literature before, see e.g. \cite{beesack_1984,beesack_1985}. These references cover equations of the form
\be
	f(\vt) = f_0(\vt) + \int_0^{\vt} d \vt'~L(\vt,\vt')f(\vt'),
	\label{eq:beesack}
\ee
	where $\vt = (t_1,...,t_N)$,
\be
	\int_0^{\vt} d\vt' = \int_0^{t_1}dt_1' \cdots \int_0^{t_N} dt_N'\,,
\ee
and the kernel $L$ is assumed to be either bounded or square integrable.
Our MTVEs \eqref{eq:inteq1dsimplified}-\eqref{eq:inteq3dsimplified} differ from \eqref{eq:beesack} in the following aspects.
\begin{enumerate}
	\item In addition to the time integrals, they also include space integrals. Space and time directions are distinguished by the form of the equations. Most importantly, the integral from 0 to $t$, which characterizes Volterra integral equations, only appears in the time directions.
	\item The kernels in Eqs.~\eqref{eq:inteq1dsimplified}-\eqref{eq:inteq3dsimplified} are in general not square integrable (see the Remark at the end of this section for details), the ones of Eqs.~\eqref{eq:inteq2dsimplified} and \eqref{eq:inteq3dsimplified} are in general not bounded either. Likewise, the specific kernels of Eqs.~\eqref{eq:inteq1d}-\eqref{eq:inteq3d} are not square integrable, those of \eqref{eq:inteq2d} and \eqref{eq:inteq3d} are not bounded.
\end{enumerate}
The first point can easily be approached using classical methods; we shall prove a corresponding theorem below (Thm. \ref{thm:l2kernels}). However, the second item shows that this is not enough to cover even the simplified physically relevant cases. It turns out that we need to utilize the more specific structure of the kernels of Eqs.~\eqref{eq:inteq1dsimplified}-\eqref{eq:inteq3dsimplified}. This will be done in Sec.~\ref{sec:boundedkernels}.\\

So let us describe how square-integrable kernels can be treated.
In the remainder of the section, we study the MTVE
\be
	f(\vt,\vx) = f_0(\vt,\vx) + \int_0^{\vt} d \vt' \int d\vx' ~L(\vt,\vx;\vt',\vx') f(\vt',\vx'),
	\label{eq:mtve}
\ee
where $\vt \in \R^N$, $\vx \in \R^M$. The integral equations \eqref{eq:inteq1dsimplified}, \eqref{eq:inteq2dsimplified} in $d=1$ and $d=2$ correspond to this structure for $N=2$ and $M=2$ or $M=4$ with special (but not square-integrable) integral kernels. The integral equation \eqref{eq:inteq3dsimplified} in $d=3$ is different because of the time shifts occurring in the integral.

\begin{theorem} \label{thm:l2kernels}
	Let $T>0$, consider the Banach space $\Banach = L^\infty \big([0,T]^N_{\vt}, L^2(\R^M_{\vx}) \big)$, and let
	\be
		\| L \|^2 = \esssup_{\vt, \vt' \in [0,T]^N} \int d \vx \, d\vx'~ |L|^2(\vt,\vx;\vt',\vx') < \infty.
	\label{eq:norml}
	\ee
	Then, for every $f_0 \in \Banach$, \eqref{eq:mtve} has a unique solution $f \in \Banach$.
\end{theorem}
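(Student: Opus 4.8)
The plan is to recast \eqref{eq:mtve} as a fixed-point equation $f = f_0 + \widehat L f$ on the Banach space $\Banach$ and to establish existence and uniqueness by a Neumann-series (Picard-iteration) argument. The crucial feature that makes this work, despite $\widehat L$ not being a contraction in general, is the Volterra structure in the time variables: iterating the operator produces a shrinking domain of time integration, which yields a factorially small bound on the iterated kernels and hence a convergent series. Concretely, I would first verify that $\widehat L$ is a bounded operator on $\Banach$. Fixing $\vt$ and applying the Cauchy--Schwarz inequality in the $\vx'$ variable together with \eqref{eq:norml}, one estimates
\be
  \| (\widehat L f)(\vt,\cdot)\|_{L^2(\R^M_{\vx})}
  \le \|L\| \int_0^{\vt} d\vt'\, \| f(\vt',\cdot)\|_{L^2}
  \le \|L\|\, \| f\|_{\Banach}\, \prod_{i=1}^N t_i \,,
\ee
so that $\widehat L$ maps $\Banach$ into $\Banach$ with $\|\widehat L\| \le \|L\|\, T^N$.

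The heart of the argument is to control the iterated operators $\widehat L^{\,n}$. Writing out $\widehat L^{\,n} f$ as an $n$-fold nested time integral (together with the accompanying space integrals), I would apply the $L^2$ bound on $L$ at each stage in the $\vx$-variables, reducing the problem to the nested \emph{time} integral
\be
  \int_0^{\vt} d\vt^{(1)} \int_0^{\vt^{(1)}} d\vt^{(2)} \cdots \int_0^{\vt^{(n-1)}} d\vt^{(n)} \, 1 \,.
\ee
Because the $N$ time directions decouple in this product structure, this integral factorizes over the coordinates and evaluates to $\prod_{i=1}^N \frac{t_i^{\,n}}{n!} \le \frac{T^{Nn}}{(n!)^N}$. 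This gives the operator-norm bound
\be
  \| \widehat L^{\,n}\| \le \|L\|^n \, \frac{T^{Nn}}{(n!)^N}\,,
\ee
whose right-hand side is summable in $n$. Consequently the Neumann series $\sum_{n=0}^\infty \widehat L^{\,n}$ converges in operator norm, $(\id - \widehat L)$ is invertible on $\Banach$, and $f = \sum_{n=0}^\infty \widehat L^{\,n} f_0$ is the unique solution in $\Banach$ for every $f_0 \in \Banach$.

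The main obstacle I anticipate is the bookkeeping in the iterated estimate: one must apply Cauchy--Schwarz carefully so that at each iteration the $L^2$-norm in the spatial variables is passed onward cleanly while the factor $\|L\|$ is extracted and only the time variables remain in the nested integral. In particular, care is needed to ensure that the essential supremum over $\vt,\vt'$ in \eqref{eq:norml} is genuinely uniform, so that the constant $\|L\|$ can be pulled out at every stage without picking up $\vt$-dependence that would spoil the factorial decay. Once the estimate $\|\widehat L^{\,n}\| \le \|L\|^n T^{Nn}/(n!)^N$ is secured, the remainder is the standard consequence that a bounded operator with convergent Neumann series has the resolvent $(\id-\widehat L)^{-1}$ bounded, delivering both existence and uniqueness at once.
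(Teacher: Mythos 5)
Your proposal is correct and follows essentially the same route as the paper: both establish boundedness of $\widehat L$ via Cauchy--Schwarz and \eqref{eq:norml}, then exploit the Volterra structure to get a factorial bound on the iterates $\widehat L^{\,n}f_0$ (the paper's $\varphi_n$), summing the resulting Neumann/Picard series and obtaining uniqueness from the same estimate. The only difference is cosmetic: the paper applies Cauchy--Schwarz jointly in $(\vt',\vx')$ and runs the induction on the squared pointwise $L^2$ norms, arriving at $(n!)^{N/2}$ in the denominator, whereas your Minkowski-plus-Hilbert--Schmidt version gives $(n!)^{N}$; both are summable.
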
 

The proof serves as a good illustration of the basic technique that shall also be used in Sec. \ref{sec:boundedkernels}. It is based on classical methods for Volterra integral equations (see \cite[chap. 3.1]{linz} and \cite{beesack_1984,beesack_1985}).

\begin{proof}
	Let $f_0 \in \Banach$. The idea is to first show that the iterations
	\be
		f_n(\vt,\vx)  = f_0(\vt,\vx) +  \int_0^{\vt} d \vt' \int d\vx' ~L(\vt,\vx;\vt',\vx') f_{n-1}(\vt',\vx'),~~~n\in \N
	\ee
	converge.  In a second step, we demonstrate that the limiting function is indeed a solution of \eqref{eq:mtve}. Third, we show that the solution is unique.

	For convenience, we introduce
	\be
		\varphi_n = f_n - f_{n-1},~~~n\in \N
	\ee
	and $\varphi_0 = f_0$. We then have
	\be
		f_n = \sum_{i=0}^n \varphi_i
		\label{eq:varphiseries}
	\ee
	and the functions $\varphi_n$ satisfy the equation
	\be
		\varphi_n(\vt,\vx)  = \int_0^{\vt} d \vt' \int d\vx' ~L(\vt,\vx;\vt',\vx') \: \varphi_{n-1}(\vt',\vx'),~~~n\in \N.
		\label{eq:varphieq}
	\ee

Let $\widehat{L}$ denote the integral operator in \eqref{eq:mtve}. First of all, we show that $\widehat{L}$ is a bounded operator on $\Banach$. Then it follows in particular that $\varphi_n \in \Banach \, \forall n \in \N_0$.
So let $f \in \Banach$. Then $f$ is an equivalence class of functions. We choose an arbitrary representative of this class, a function on $[0,T]^N \times \R^M$ that is square-integrable in $\vx$ for almost every $\vt$, and call this function simply $f$ again. Using \eqref{eq:varphieq} and the Cauchy-Schwarz inequality, we find for every $n \in \N$ and $\vt\in [0,T]^N$:
\begin{align}
	\| (\widehat{L}f)(\vt,\cdot) \|^2_{L^2} &\leq \int d \vx \left[ \left( \int_0^\vt d\vt' \int d \vx' ~|L|^2(\vt,\vx;\vt',\vx')  \right) \left( \int_0^\vt d\vt' \int d \vx' ~|f|^2(\vt',\vx')\right) \right]\nonumber\\
	&= \left( \int_0^\vt d\vt' \int d \vx \, d\vx'~|L|^2(\vt,\vx;\vt',\vx') \right) \left( \int_0^\vt d\vt'~\| f(\vt',\cdot) \|^2_{L^2}\right)\nonumber\\
	&\leq (t_1 \cdots t_N) \: \|L\|^2 \:  \int_0^\vt d\vt'~\| f(\vt',\cdot) \|^2_{L^2},
	\label{eq:auxiliaryineq}
\end{align}
where it remains open at first whether the $L^2$ norms are finite or infinite. However, since $\| f(\vt',\cdot) \|_{L^2}\leq \|f\|_{\Banach}$ for almost every $\vt'$, we obtain that
\begin{align}
\| (\widehat{L}f)(\vt,\cdot) \|^2_{L^2} 
&\leq (t_1 \cdots t_N) \: \|L\|^2 \:  \int_0^\vt d\vt'~\|f\|^2_{\Banach} \nonumber\\
&\leq (t_1 \cdots t_N)^2 \: \|L\|^2 \: \|f\|^2_{\Banach} \label{eq:aux2ineq}
\end{align}
for every $\vt$, which is independent of the choice of representative of $f$. (In particular, the $L^2$ norm of $(\widehat{L}f)(\vt,\cdot)$ turns out finite for every $\vt$.)\\
Furthermore, the estimate \eqref{eq:aux2ineq} implies:
\be
	\| \widehat{L} f \|_{\Banach} = \esssup_{\vt \in [0,T]^N} \| \widehat{L} f (\vt,\cdot)\|_{L^2} \leq T^{N} \: \| L \| \: \| f \|_{\Banach}.
\ee
So $\widehat{L}$ is indeed a bounded operator on $\Banach$, and we have $\varphi_n \in \Banach \,\forall n \in \N_0$.

Next, we show that the sequence \eqref{eq:varphiseries} has a limit in $\Banach$. To this end, we now prove the following bound for the point-wise norms $\| \varphi_n(\vt,\cdot) \|_{L^2}$ by induction over $n \in \N_0$:
\be
	\| \varphi_n(\vt,\cdot) \|^2_{L^2} \leq \| f_0\|^2_{\Banach} \: \| L \|^{2n} \frac{(t_1 \cdots t_N)^{2n}}{(n!)^N}
	\label{eq:varphiind}
\ee
for every $\vt$. For $n=0$, the claim is obvious as $\varphi_0 = f_0$. So let \eqref{eq:varphiind} hold for some $n \in \N$. Recall $\varphi_{n+1} = \widehat{L} \varphi_n$. That means, we can use \eqref{eq:auxiliaryineq} to estimate the norm $\| \varphi_{n+1}(\vt,\cdot) \|^2_{L^2}$ in terms of $\| \varphi_n(\vt,\cdot) \|^2_{L^2}$. Plugging \eqref{eq:varphiind} for $n$ into \eqref{eq:auxiliaryineq} yields:
\begin{align}
	\| \varphi_{n+1}(\vt,\cdot) \|^2_{L^2} &\leq (t_1 \cdots t_N) \: \|L\|^2 \:  \int_0^\vt d\vt'~\| f_0\|^2_{\Banach} \: \| L \|^{2n} \frac{(t'_1 \cdots t'_N)^{2n}}{(n!)^N}\nonumber\\
	&= \| f_0\|^2_{\Banach} \: \| L \|^{2(n+1)} \frac{(t_1 \cdots t_N)^{2(n+1)}}{(n!)^N (2n+1)^N}\nonumber\\
	&\leq  \| f_0\|^2_{\Banach} \: \| L \|^{2(n+1)} \frac{(t_1 \cdots t_N)^{2(n+1)}}{[(n+1)!]^N}.
\end{align}
This proves \eqref{eq:varphiind}. In particular, \eqref{eq:varphiind} implies:
\be
	\| \varphi_n \|_{\Banach} \leq \| f_0\|_{\Banach} \: \| L \|^n \frac{T^{n N}}{(n!)^{N/2}} \: .
	\label{eq:normvarphin}
\ee
This bound in turn shows that the series $\sum_{i=0}^\infty \| \varphi_i \|_{\Banach}$ converges. Hence, the iterations converge, i.e., $f_n \rightarrow f \in \Banach$ for $n\rightarrow \infty$. This completes the first step of the proof.

Next, we show that the series $f= \sum_{i=0}^\infty \varphi_i$ indeed is a solution of \eqref{eq:mtve}. Since $\widehat L$ is bounded, we have that
\be
	\widehat{L} \sum_{i=0}^\infty \varphi_i = \sum_{i=0}^\infty \widehat{L}\varphi_i = \sum_{i=0}^\infty \varphi_{i+1} = \sum_{i=0}^\infty \varphi_i - \varphi_0,
	\label{eq:sumintexchange}
\ee
which is equivalent to $f = f_0 + \widehat{L} f$.

Finally, we turn to the uniqueness of the solution. To this end, let $\widetilde{f} \in \Banach$ be another solution of \eqref{eq:mtve}. Then the difference $g = f - \widetilde{f}$ satisfies the equation $g = \widehat{L} g$. This is similar to the equation $\varphi_n = \widehat{L} \varphi_{n-1}$. Thus, in the same way as we derived \eqref{eq:normvarphin}, we obtain the inequality
\be
	\| g\|_{\Banach} \leq \| g \|_{\Banach} \: \| L \|^n \frac{T^{n N}}{(n!)^{N/2}}~~\forall n \in \N.
\ee
Thus, for $n \rightarrow \infty$ we find $\|g \| = 0$, hence $f = \widetilde{f}$. This shows the uniqueness of the solution, completing the proof. \qed
\end{proof}

\paragraph{Remarks.}
\begin{enumerate}
\item	The kernels in the integral equations \eqref{eq:inteq1dsimplified}-\eqref{eq:inteq3dsimplified} may [and those of \eqref{eq:inteq1d}-\eqref{eq:inteq3d} do] violate the square-integrability condition \eqref{eq:norml} for the following reason: if $L$ is invariant under translations as in $(\vx,\vx') \mapsto (\vx+\mathbf{a},\vx'+\mathbf{a})$ for every $\mathbf{a}\in\R^d$, then $L(\vt,\cdot,\vt',\cdot)$, if nonzero, cannot be square-integrable in $\R^{2d}$ as a function of $\vx$ and $\vx'$ for any $\vt$ and $\vt'$.

\item Since we have obtained the existence of a unique solution up to time $T$ for arbitrary $T>0$, it follows that a unique solution exists for all times, i.e., on $[0,\infty)^N \times \R^M$. In fact, the estimate \eqref{eq:varphiind} shows that the solution can at most grow exponentially according to
\be
\|f(\vt,\cdot)\|_{L^2} \leq \|f_0\|_{\Banach} \: e^{\|L\| t_1 \cdots t_N} \,.
\ee
Thus, $e^{-\|L\| t_1\cdots t_N} f \in L^\infty([0,\infty)^N, L^2(\R^M))$ whenever $f_0$ lies in this space.
\end{enumerate}

\subsection{Bounded interaction kernels} \label{sec:boundedkernels}

In this section, we provide existence and uniqueness results for the simplified equations \eqref{eq:inteq1dsimplified}-\eqref{eq:inteq3dsimplified} and bounded interaction kernels $K$ (the overall kernel $L$ is still singular). The proofs use the same strategy as before, however, essential use is made of the fact that the equations only contain integrals along (and inside of) past light cones. This is the essential feature that allows us to deal also with kernels which do not satisfy \eqref{eq:norml}.

\begin{theorem}[$d=1$] \label{thm:1dboundedkernel}
	For every $m_1,m_2 \geq 0$, every
	\[
		\psi^\free \in \Banach_1 = L^\infty \big([0,T]^2_{(t_1,t_2)}, L^2(\R^{2}_{(z_1,z_2)}) \big), 
	\]
	and every essentially bounded $K : \R^4 \rightarrow \C$, the integral equation \eqref{eq:inteq1dsimplified} has a unique solution $\psi \in \Banach_1$.
\end{theorem}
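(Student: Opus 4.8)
The plan is to run the same Neumann-iteration scheme as in the proof of Thm.~\ref{thm:l2kernels}, writing \eqref{eq:inteq1dsimplified} as $\psi = \psi^\free + \widehat L\psi$ with $\widehat L$ the integral operator on $\Banach_1$, and building the solution as $\psi = \sum_{n=0}^\infty \varphi_n$ where $\varphi_0 = \psi^\free$ and $\varphi_n = \widehat L\varphi_{n-1}$. The obstacle is that the overall kernel here is \emph{not} square-integrable in the sense of \eqref{eq:norml}: since the Green's functions depend only on the differences $z_i - z_i'$, the quantity $\int dz_1\,dz_2\,dz_1'\,dz_2'\,|L|^2$ diverges, so the estimate \eqref{eq:auxiliaryineq} is unavailable. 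The whole point is to replace $\|L\|$ by a bound that exploits the compact spatial support coming from the retarded light cones.

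The key estimate runs as follows. Since $|J_0|\le 1$ and $|K|\le\|K\|_{\infty}:=\esssup|K|$, the integrand of $\widehat L$ is dominated by
\[
	|L| \;\le\; \tfrac{|\lambda|}{4}\,\|K\|_\infty\; H(t_1-t_1'-|z_1-z_1'|)\,H(t_2-t_2'-|z_2-z_2'|),
\]
i.e.\ by a constant times the indicator of $\{|z_1-z_1'|<t_1-t_1'\}\times\{|z_2-z_2'|<t_2-t_2'\}$. For fixed times this is a convolution kernel on $\R^2_{(z_1,z_2)}$ whose $L^1$-norm is exactly $4(t_1-t_1')(t_2-t_2')$; Young's convolution inequality therefore bounds the induced spatial operator norm by the same number. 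Combining this with Minkowski's integral inequality in the time variables gives, with $C := \tfrac{|\lambda|}{4}\|K\|_\infty$,
\[
	\big\|(\widehat L\psi)(t_1,\cdot,t_2,\cdot)\big\|_{L^2} \;\le\; 4C \int_0^{t_1}\!\!\int_0^{t_2} dt_1'\,dt_2'\,(t_1-t_1')(t_2-t_2')\,\big\|\psi(t_1',\cdot,t_2',\cdot)\big\|_{L^2}.
\]
This is now a genuine two-dimensional Volterra estimate with a bounded, \emph{time-dependent} weight, and it shows in particular that $\widehat L$ is bounded on $\Banach_1$ with $\|\widehat L\psi\|_{\Banach_1}\le C\,T^4\,\|\psi\|_{\Banach_1}$.

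From here the argument mirrors Thm.~\ref{thm:l2kernels}. By induction on $n$, using $\int_0^t(t-s)\,s^{2n}\,ds = t^{2n+2}/[(2n+1)(2n+2)]$ for each time integral, one proves the pointwise bound
\[
	\big\|\varphi_n(t_1,\cdot,t_2,\cdot)\big\|_{L^2} \;\le\; \|\psi^\free\|_{\Banach_1}\,(4C)^n\,\frac{(t_1 t_2)^{2n}}{\big((2n)!\big)^2},
\]
which factorizes cleanly because both the weight and the bound split into a $t_1$- and a $t_2$-factor. Taking the essential supremum gives $\|\varphi_n\|_{\Banach_1}\le \|\psi^\free\|_{\Banach_1}(4C)^n T^{4n}/((2n)!)^2$, so $\sum_n\|\varphi_n\|_{\Banach_1}$ converges (the denominator $((2n)!)^2$ makes this far better than geometric, and in fact yields convergence for arbitrary $T$). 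Hence $\psi=\sum\varphi_n\in\Banach_1$ exists; that it solves $\psi=\psi^\free+\widehat L\psi$ follows by exchanging $\widehat L$ with the convergent series exactly as in \eqref{eq:sumintexchange}, and uniqueness follows by applying the same induction to a difference $g=\widehat L g$ of two solutions and letting $n\to\infty$.

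I expect the only genuinely new step to be this key estimate — recognizing that, although $L$ fails \eqref{eq:norml}, the light-cone (Heaviside) factors confine the spatial integration to a square of side $O(t_i-t_i')$, so that the spatial part acts as a convolution of finite $L^1$-mass and Young's inequality supplies a finite operator norm growing only polynomially in the elapsed times. Everything else is a routine adaptation of the classical Volterra iteration already carried out for Thm.~\ref{thm:l2kernels}.
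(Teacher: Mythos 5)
Your proposal is correct, and the overall architecture --- Neumann/Picard iteration, pointwise-in-time $L^2$ bounds with factorial decay, and uniqueness via the same estimate applied to $g=\widehat L g$ --- is exactly the paper's. The one place where you proceed differently is the key estimate: the paper squares the norm and applies the Cauchy--Schwarz inequality with respect to the measure $H(t_1-t_1'-|z_1-z_1'|)H(t_2-t_2'-|z_2-z_2'|)\,dt'dz'$, absorbing $\|K\|_\infty^2$ into one factor and Fubini-ing the light-cone indicators in the other, whereas you dominate the kernel by a difference kernel of $L^1$-mass $4(t_1-t_1')(t_2-t_2')$ and invoke Young's convolution inequality together with Minkowski's integral inequality, working with the $L^2$ norm directly rather than its square. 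Both routes rest on the identical structural observation (the retarded light cones confine the spatial integration to an interval of length $2(t_i-t_i')$), and both yield a two-dimensional Volterra inequality of the same shape; your unsquared version even comes out slightly sharper ($[(2n)!]^2$ rather than $(2n)!$ in the denominator of $\|\varphi_n\|_{\Banach_1}$), though this makes no difference for convergence. The only cosmetic gap is that you do not address the choice of representatives in the Bochner space and the attendant almost-everywhere qualifications, which the paper handles explicitly but which are routine.
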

The theorem evidently covers the physically natural interaction kernel in Eq. \eqref{eq:inteq1d}.

For the proof (and the following ones), we follow the strategy of the proof of Thm. \ref{thm:l2kernels}. We again define the functions $f_n$ as the $n$-th iteration (starting from $f_0 = \psi^\free$) and $\varphi_n$ as the difference $f_n-f_{n-1}$ with $\varphi_0 = f_0$. The integral operator in \eqref{eq:inteq1dsimplified} is denoted by $\widehat{L}$. We describe only the essential new steps in the proof (in particular how to obtain an appropriate estimate for $\| \varphi_n \|$); the rest then follows as before.

\begin{proof}
	We first show that $\widehat{L}$ maps $\Banach_1$ to $\Banach_1$. Using  \eqref{eq:inteq1dsimplified}, the Cauchy-Schwarz inequality, and that $|J_0| \leq 1$, we find:
\begin{align}
	&\| (\widehat{L}\psi)(t_1,\cdot,t_2,\cdot) \|^2_{L^2} \leq \frac{\lambda^2}{16} \int dz_1 \, dz_2 \biggl[ \biggl( \int_0^{t_1} dt_1' \int_0^{t_2} dt_2' \int d z_1' \, dz_2' ~H(t_1-t_1'-|z_1-z_1'|)  \nonumber\\
& \times H(t_2-t_2'-|z_2-z_2'|) \, |K|^2(t_1',z_1',t_2',z_2')  \biggr)\nonumber\\
& \times \left. \left( \int_0^{t_1} dt_1' \int_0^{t_2} dt_2' \int d z_1' \, dz_2' \, H(t_1-t_1'-|z_1-z_1'|) \, H(t_2-t_2'-|z_2-z_2'|) \, |\psi|^2(t_1',z_1',t_2',z_2') \right)\right]\nonumber\\
&\leq \frac{\lambda^2}{16} \int dz_1 \, dz_2 \left[ \left( \int_0^{t_1} dt_1' \int_0^{t_2} dt_2' ~ 2(t_1-t_1') \, 2(t_2-t_2') \, \| K\|_\infty^2 \right) \right.\nonumber\\
 & \times \left. \left( \int_0^{t_1} dt_1' \int_0^{t_2} dt_2' \int d z_1' \, dz_2' \, H(t_1-t_1'-|z_1-z_1'|) \, H(t_2-t_2'-|z_2-z_2'|) \, |\psi|^2(t_1',z_1',t_2',z_2') \right)\right]\nonumber\\
&= \frac{\lambda^2}{16} \, \| K\|_\infty^2 \, (t_1 t_2)^2 \int dz_1 \, dz_2 \int_0^{t_1} dt_1' \int_0^{t_2} dt_2' \int d z_1' \, dz_2'~ H(t_1-t_1'-|z_1-z_1'|) \nonumber\\
&~~~~~~~~~~~~~~~~~~~~\times H(t_2-t_2'-|z_2-z_2'|) |\psi|^2(t_1',z_1',t_2',z_2') \: .
\end{align}
Exchanging the order of integrations and performing the $z_1,z_2$-integrations first leads to:
\begin{align}
	&\| (\widehat{L} \psi)(t_1,\cdot,t_2,\cdot) \|^2_{L^2}   \nonumber\\
	&\leq \frac{\lambda^2}{16} \| K\|_\infty^2 \, (t_1 t_2)^2  \int_0^{t_1} dt_1' \int_0^{t_2} dt_2' \int d z_1' \, dz_2'~ 2(t_1-t_1') \, 2(t_2-t_2')
\, |\psi|^2(t_1',z_1',t_2',z_2')\nonumber\\
&= \frac{\lambda^2}{4}\, \| K\|_\infty^2 \, (t_1 t_2)^2  \int_0^{t_1} dt_1' \int_0^{t_2} dt_2' ~ (t_1-t_1') \, (t_2-t_2') \, \|\psi(t_1',\cdot,t_2'\cdot)\|^2_{L^2}
\label{eq:hilfsformelind1dl2}
\end{align}
Therefore, replacing $\|\psi(t_1',\cdot,t_2'\cdot)\|^2_{L^2}$ with $\|\psi \|_{\Banach_1}^2 = \esssup_{t'_1, t'_2 \in [0,T]} \|\psi(t_1',\cdot,t_2'\cdot)\|^2_{L^2}$, we find:
\be
	\| \widehat{L} \psi\|^2_{\Banach_1} \leq \frac{\lambda^2}{16}\esssup_{t_1, t_2 \in [0,T]} \| K\|_\infty^2 \, (t_1 t_2)^4 \, \|\psi\|_{\Banach_1}^2 = \frac{\lambda^2}{16} \, \| K\|_\infty^2 \, T^8 \, \|\psi\|_{\Banach_1}^2.
\ee
This shows that $\widehat{L} : \Banach_1 \to \Banach_1$ is bounded. In particular, $\varphi_n \in \Banach_1$ for all $n \in \N_0$. Next, we prove by induction that \footnote{The estimate is not optimal but sufficient.}
\be
		\| \varphi_n(t_1,\cdot,t_2,\cdot) \|^2_{L^2} \leq \left( \frac{\lambda^2}{4}\right)^{n} \: \| \psi^\free\|_{\Banach_1}^2 \: \| K \|_\infty^{2n} \: \frac{(t_1 t_2)^{4n}}{[(2n)!]^2} \,.
\label{eq:inf1dl2ind}
\ee
For $n=0$ this is obviously true. So let \eqref{eq:inf1dl2ind} hold for some $n \in \N_0$. Then, by plugging  \eqref{eq:inf1dl2ind} into \eqref{eq:hilfsformelind1dl2} we obtain that
\begin{align}
& \| \varphi_{n+1}(t_1,\cdot,t_2,\cdot) \|^2_{L^2}\nonumber\\
& \leq \left( \frac{\lambda^2}{4}\right)^{n+1} \| \psi^\free\|_{\Banach_1}^2  \: \| K\|_\infty^{2(n+1)} (t_1 t_2)^2  \int_0^{t_1} dt_1' \int_0^{t_2} dt_2' ~ (t_1-t_1')(t_2-t_2') \frac{(t_1' t_2')^{4n}}{[(2n)!]^2}\nonumber\\
&= \left( \frac{\lambda^2}{4}\right)^{n+1} \| \psi^\free\|_{\Banach_1}^2  \: \| K\|_\infty^{2(n+1)} \frac{(t_1 t_2)^{4(n+1)}}{[(2n)!]^2} \, \frac{1}{[(4n+1)(4n+2)]^2}\nonumber\\
& \leq \left( \frac{\lambda^2}{4}\right)^{n+1} \| \psi^\free\|_{\Banach_1}^2  \: \| K\|_\infty^{2(n+1)} \frac{(t_1 t_2)^{4(n+1)}}{[(2(n+1))!]^2}\:.
\end{align}
This proves \eqref{eq:inf1dl2ind}. In particular, \eqref{eq:inf1dl2ind} implies:
\be
	\| \varphi_n \|_{\Banach_1} \leq \left(\frac{|\lambda|}{2}\right)^n  \| \psi^\free\|_{\Banach_1} \: \| K \|_\infty^n \, \frac{T^{4n}}{(2n)!} \: .
\ee
Hence, $\sum_i \| \varphi_i \|_{\Banach_1} < \infty$. Going through the analogous steps as in the proof of Thm. \ref{thm:l2kernels}, we find that $\sum_i \varphi_i \in \Banach_1$ yields the unique solution of \eqref{eq:inteq1dsimplified}. \qed
\end{proof}

\begin{theorem}[$d=2$] \label{thm:2dboundedkernel}
	For every $m_1,m_2 \geq 0$, every essentially bounded $K : \R^6 \rightarrow \C$ and every $\psi^\free \in \Banach_2= L^\infty \big([0,T]^2_{(t_1,t_2)}, L^2(\R^{4}_{(\vx_1,\vx_2)}) \big)$, \eqref{eq:inteq2dsimplified} has a unique solution $\psi \in \Banach_2$.
\end{theorem}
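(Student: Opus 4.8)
The plan is to follow the iteration scheme of Theorem \ref{thm:1dboundedkernel} verbatim in its architecture, replacing the one-dimensional light-cone estimate by the analogous two-dimensional one. I set $f_0 = \psi^\free$, define $f_n$ as the $n$-th Picard iterate and $\varphi_n = f_n - f_{n-1}$ with $\varphi_0 = f_0$, and denote the integral operator in \eqref{eq:inteq2dsimplified} by $\widehat{L}$. The whole proof reduces to establishing a single pointwise estimate of the form $\|(\widehat{L}\psi)(t_1,\cdot,t_2,\cdot)\|_{L^2}^2 \le C\,(t_1 t_2)^a \int_0^{t_1}\!dt_1' \int_0^{t_2}\!dt_2'\,(t_1-t_1')^b (t_2-t_2')^b \,\|\psi(t_1',\cdot,t_2',\cdot)\|_{L^2}^2$, because once such a recursion is in hand, the induction producing a factorial denominator and the summability of $\sum_n \|\varphi_n\|_{\Banach_2}$ go through exactly as in the previous two proofs.

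The first step is to apply Cauchy--Schwarz in $L^2(\R^4_{(\vx_1',\vx_2')})$ to split $\widehat{L}\psi$ into a ``kernel'' factor and a ``$\psi$'' factor, exactly as in the $d=1$ proof. Here the kernel of the integral operator is the product of the two Green's-function factors times $K$; bounding $|K| \le \|K\|_\infty$ and using $|\cos(\cdot)| \le 1$, the kernel factor becomes
\be
  \int_0^{t_1}\!\!\int_0^{t_2}\!\!dt_1'\,dt_2' \int d^2\vx_1'\,d^2\vx_2'\;
  \frac{H(t_1-t_1'-|\vx_1-\vx_1'|)}{(t_1-t_1')^2-|\vx_1-\vx_1'|^2}\,
  \frac{H(t_2-t_2'-|\vx_2-\vx_2'|)}{(t_2-t_2')^2-|\vx_2-\vx_2'|^2}\,.
\ee
The key computation is the spatial integral $\int_{|\vx'|<\tau} d^2\vx' \,\frac{1}{\tau^2 - |\vx'|^2}$ over the disk of radius $\tau = t_i - t_i'$. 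This integral is logarithmically divergent at the boundary $|\vx'| = \tau$, so the naive factorization fails and this is the main obstacle: in $d=2$ the Green's function $\frac{\cos(m\sqrt{x^2})}{\sqrt{x^2}}$ is genuinely not square-integrable over its past light cone, unlike the bounded $J_0$ kernel in $d=1$.

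To overcome this I would not place both singular factors on the kernel side. The remedy is to redistribute the singularity: apply Cauchy--Schwarz so that one power of $1/\sqrt{(t_i-t_i')^2 - |\vx_i-\vx_i'|^2}$ stays with the kernel factor and the matching inverse weight is absorbed differently, or equivalently keep the measure $\frac{d^2\vx'}{\sqrt{\tau^2-|\vx'|^2}}$ intact on \emph{both} factors. Since $\int_{|\vx'|<\tau} \frac{d^2\vx'}{\sqrt{\tau^2-|\vx'|^2}} = 2\pi\tau$ is finite (the square-root singularity is integrable in the plane), I would weight the Cauchy--Schwarz splitting against the finite measure $d\mu_\tau(\vx') = \frac{H(\tau-|\vx'|)}{\sqrt{\tau^2-|\vx'|^2}}\,d^2\vx'$ in each variable, writing the integrand as $1 \cdot |\psi|$ against $d\mu_{t_1-t_1'}\,d\mu_{t_2-t_2'}$. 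The kernel factor then contributes the finite total mass $2\pi(t_1-t_1')\cdot 2\pi(t_2-t_2')$, and the $\psi$-factor is $\int d\mu_{t_1-t_1'}\,d\mu_{t_2-t_2'}\,|\psi|^2$. After exchanging the order of integration and carrying out the $\vx_1,\vx_2$ integrations against the same weights, one again recovers $\|\psi(t_1',\cdot,t_2',\cdot)\|_{L^2}^2$ with a polynomial factor in $(t_1-t_1')(t_2-t_2')$. This yields a recursion of the required shape, and from there the induction, the geometric-type bound on $\|\varphi_n\|_{\Banach_2}$ with a factorial denominator, convergence of the Neumann series, and uniqueness all follow exactly as in Theorem \ref{thm:l2kernels} and Theorem \ref{thm:1dboundedkernel}.
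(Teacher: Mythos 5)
Your proposal is correct and takes essentially the same route as the paper: the paper's proof likewise applies Cauchy--Schwarz with the singular light-cone factors $\frac{H(t_i-t_i'-|\vx_i-\vx_i'|)}{\sqrt{(t_i-t_i')^2-|\vx_i-\vx_i'|^2}}$ kept intact as a weight on \emph{both} brackets, uses the identity $\int_{|\vx|<\tau} d^2\vx\,/\sqrt{\tau^2-|\vx|^2}=2\pi\tau$, and after the substitution $\vy_i=\vx_i-\vx_i'$ arrives at the recursion $\|(\widehat{L}\psi)(t_1,\cdot,t_2,\cdot)\|_{L^2}^2 \le \lambda^2\|K\|_\infty^2\,\tfrac{(t_1t_2)^2}{4}\int_0^{t_1}\!\!\int_0^{t_2}(t_1-t_1')(t_2-t_2')\,\|\psi(t_1',\cdot,t_2',\cdot)\|_{L^2}^2\,dt_1'\,dt_2'$, from which the factorial bound, convergence, and uniqueness follow as you describe. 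Your diagnosis of why the naive splitting fails (logarithmic divergence of the squared kernel over the light cone) and the weighted Cauchy--Schwarz remedy coincide exactly with the paper's argument.
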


The proof again uses the previous ideas and notation.

\begin{proof}
	We first show that the integral operator $\widehat{L}$ in \eqref{eq:inteq2dsimplified} is a bounded operator on $\Banach_2$. Using \eqref{eq:inteq2dsimplified} and the Cauchy-Schwarz inequality, we find for every $\psi \in \Banach_2$:
\begin{align}
	& \| (\widehat{L} \psi)(t_1,\cdot,t_2,\cdot) \|^2_{L^2}\nonumber\\
 & \leq \frac{\lambda^2}{(2\pi)^4} \int d^2 \vx_1 \, d^2 \vx_2 \left[ \left(  \int_0^{t_1} dt_1'  \int_0^{t_2} dt_2' \int d^2 \vx_1' \, d^2 \vx_2'~ \frac{H(t_1-t_1'-|\vx_1-\vx_1'|)}{\sqrt{(t_1-t_1')^2 - |\vx_1-\vx_1'|^2}} \right. \right.\nonumber\\
& \left. \frac{H(t_2-t_2'-|\vx_2-\vx_2'|)}{\sqrt{(t_2-t_2')^2 - |\vx_2-\vx_2'|^2}} |K|^2(t_1',\vx_1',t_2',\vx_2') \right) \left(  \int_0^{t_1} dt_1'  \int_0^{t_2} dt_2' \int d^2 \vx_1' \, d^2 \vx_2'~ \right.\nonumber\\
& \left. \left. \frac{H(t_1-t_1'-|\vx_1-\vx_1'|)}{\sqrt{(t_1-t_1')^2 - |\vx_1-\vx_1'|^2}}  \frac{H(t_2-t_2'-|\vx_2-\vx_2'|)}{\sqrt{(t_2-t_2')^2 - |\vx_2-\vx_2'|^2}} |\psi|^2(t_1',\vx_1',t_2',\vx_2') \right) \right].
\label{eq:2dinfcalc1}
\end{align}
The expression in the first round brackets is smaller than or equal to
\begin{align}
	&\| K \|_\infty^2 \int_0^{t_1} dt_1'  \int_0^{t_2} dt_2' \int d^2 \vx_1' \, d^2 \vx_2'~ \frac{H(t_1-t_1'-|\vx_1-\vx_1'|)}{\sqrt{(t_1-t_1')^2 - |\vx_1-\vx_1'|^2}} \frac{H(t_2-t_2'-|\vx_2-\vx_2'|)}{\sqrt{(t_2-t_2')^2 - |\vx_2-\vx_2'|^2}}\nonumber\\
&= \| K \|_\infty^2 \int_0^{t_1} dt_1'  \int_0^{t_2} dt_2' ~ (2\pi)^2 (t_1-t_1') (t_2-t_2')\nonumber\\
&= (2\pi)^2 \| K \|_\infty^2 \frac{(t_1t_2)^2}{4},
\end{align}
where in the second line we made use of the identity
\be
	\int_{|\vx| < \tau}d^2 \vx \, \frac{1}{\sqrt{\tau^2-|\vx|^2}} = 2 \pi \tau.
	\label{eq:identity2d}
\ee
Thus, we find with \eqref{eq:2dinfcalc1}:
\begin{align}
	&\| (\widehat{L} \psi)(t_1,\cdot,t_2,\cdot) \|^2_{L^2} \leq \frac{\lambda^2}{(2\pi)^2} \| K \|_\infty^2 \frac{(t_1t_2)^2}{4} \int d^2 \vx_1 \, d^2 \vx_2 \int_0^{t_1} dt_1'  \int_0^{t_2} dt_2' \int d^2 \vx_1' \, d^2 \vx_2' \nonumber\\
 & ~~~~\frac{H(t_1-t_1'-|\vx_1-\vx_1'|)}{\sqrt{(t_1-t_1')^2 - |\vx_1-\vx_1'|^2}}  \frac{H(t_2-t_2'-|\vx_2-\vx_2'|)}{\sqrt{(t_2-t_2')^2 - |\vx_2-\vx_2'|^2}} \, |\psi|^2(t_1',\vx_1',t_2',\vx_2')
\end{align}
Now we change the order of integration, introduce new integration variables $\vy_i = \vx_i-\vx_i'$ instead of $\vx_i$ and integrate over $\vy_i$ (using again \eqref{eq:identity2d}). This yields:
\begin{align}
	& \| (\widehat{L} \psi)(t_1,\cdot,t_2,\cdot) \|^2_{L^2} \nonumber\\
&\leq \lambda^2 \, \| K \|_\infty^2 \frac{(t_1t_2)^2}{4} \int_0^{t_1} dt_1'  \int_0^{t_2} dt_2' \int d^2 \vx_1' \, d^2 \vx_2' ~(t_1-t_1') (t_2-t_2') \,  |\psi|^2(t_1',\vx_1',t_2',\vx_2')\nonumber\\
&= \lambda^2 \, \| K \|_\infty^2 \frac{(t_1t_2)^2}{4} \int_0^{t_1} dt_1'  \int_0^{t_2} dt_2' ~(t_1-t_1') (t_2-t_2') \,  \| \psi(t_1',\cdot,t_2',\cdot) \|^2_{L^2}.
\label{eq:hilfsformelind2dl2}
\end{align}
Replacing $\| \psi(t_1',\cdot,t_2',\cdot) \|^2_{L^2}$ in \eqref{eq:hilfsformelind2dl2} with $\| \psi \|^2_{\Banach_2}$, we find:
\be
	\| \widehat{L} \psi \|^2_{\Banach_2} \leq \esssup_{t_1,t_2 \in [0,T]} \lambda^2 \, \| K \|_\infty^2 \frac{(t_1t_2)^2}{4} \, \| \psi \|^2_{\Banach_2} \, \frac{(t_1 t_2)^2}{4} = \lambda^2 \, \| \psi \|^2_{\Banach_2} \, \| K \|_\infty^2 \, \frac{T^8}{16}.
\ee
This shows that $\widehat{L}$ is a bounded operator on $\Banach_2$. Hence, $\varphi_n \in \Banach_2 \, \forall n \in \N_0$.

Next, we prove the following estimate for $\varphi_n,~n\in \N_0$:
\be
	\| \varphi_n(t_1,\cdot,t_2,\cdot) \|_{L^2}^2 \leq \lambda^{2n} \, \| \psi^\free \|^2_{\Banach_2} \, \frac{\| K \|_\infty^{2n}}{4^n} \, \frac{(t_1t_2)^{4n}}{[(2n)!]^2} \:.
	\label{eq:2dinfind}
\ee
For $n=0$ this obviously holds. So let \eqref{eq:2dinfind} be true for some $n \in \N_0$. Plugging \eqref{eq:2dinfind} into \eqref{eq:hilfsformelind2dl2} yields:
\begin{align}
	\| \varphi_{n+1}(t_1,\cdot,t_2,\cdot) \|^2_{L^2} &\leq \lambda^{2(n+1)} \, \| \psi^\free \|^2_{\Banach_2} \, \frac{\| K \|_\infty^{2(n+1)}}{4^{n+1}} \int_0^{t_1} dt_1'  \int_0^{t_2} dt_2' ~(t_1-t_1') (t_2-t_2') \,  \frac{(t_1't_2')^{4n}}{[(2n)!]^2}\nonumber\\
 & =  \lambda^{2(n+1)} \, \| \psi^\free \|^2_{\Banach_2} \, \frac{\| K \|_\infty^{2(n+1)}}{4^{n+1}} \frac{(t_1 t_2)^{4(n+1)}}{[(2n)!]^2} \, \frac{1}{[(4n+1)(4n+2)]^2}\nonumber\\
&\leq \lambda^{2(n+1)} \, \| \psi^\free \|^2_{\Banach_2} \, \frac{\| K \|_\infty^{2(n+1)}}{4^{n+1}} \frac{(t_1 t_2)^{4(n+1)}}{[(2(n+1))!]^2}.
\end{align}
This proves \eqref{eq:2dinfind}.  \eqref{eq:2dinfind} in particular implies:
\be
	\| \varphi_n \|_{\Banach_2} \leq |\lambda|^n \, \| \psi^\free \|_{\Banach_2} \, \frac{\| K \|_\infty^n}{2^n} \frac{T^{4n}}{(2n)!} \: .
\ee
This bound shows that $\sum_i \| \varphi_i \|_{\Banach_2}$ converges. As before, we conclude that $\sum_i \varphi_i \in \Banach_2$ is the unique solution of \eqref{eq:inteq2dsimplified}. \qed
\end{proof}

\begin{theorem}[$d=3$] \label{thm:3dboundedkernel}
	For every bounded $K:\R^8 \rightarrow \C$ and every $\psi^\free \in \Banach_3$, \eqref{eq:inteq3dsimplified} possesses a unique solution $\psi \in \Banach_3$.
\end{theorem}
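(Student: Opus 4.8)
The plan is to run the same iteration/Neumann-series argument as in Thms.~\ref{thm:l2kernels}--\ref{thm:2dboundedkernel}: set $f_0 = \psi^\free$, define the iterates $f_n = \psi^\free + \widehat L f_{n-1}$ and the increments $\varphi_n = f_n - f_{n-1}$ (with $\varphi_0 = f_0$ and $\varphi_n = \widehat L \varphi_{n-1}$), establish a pointwise-in-time bound of the form $\|\varphi_n(t_1,\cdot,t_2,\cdot)\|_{L^2}^2 \le C^n \|\psi^\free\|_{\Banach_3}^2 (t_1t_2)^{4n}/[(2n)!]^2$ by induction, sum the resulting super-exponentially small $\|\varphi_n\|_{\Banach_3}$, and conclude convergence, solvability, and uniqueness exactly as before. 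The one genuinely new feature of \eqref{eq:inteq3dsimplified} is that the time integrations have already been carried out against the $\delta$-functions, so that instead of an explicit Volterra integral $\int_0^{t_i}dt_i'$ the retarded times appear inside the \emph{argument} of $\psi$ as $t_i - |\vx_i - \vx_i'|$. The main task is therefore to recover a Volterra-type estimate despite this time delay.

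The decisive computation is the boundedness of $\widehat L$ on $\Banach_3$, and here is how I would obtain it. Writing out $\|(\widehat L\psi)(t_1,\cdot,t_2,\cdot)\|_{L^2}^2$ and applying the Cauchy-Schwarz inequality to the $(\vx_1',\vx_2')$-integral with the nonnegative weight $w = \tfrac{H(t_1-|\vx_1-\vx_1'|)}{|\vx_1-\vx_1'|}\tfrac{H(t_2-|\vx_2-\vx_2'|)}{|\vx_2-\vx_2'|}$, the ``mass'' factor $\int w\, |K|^2\, d^3\vx_1'\,d^3\vx_2'$ is finite and explicitly computable: using $\int_{|\vy|<t} |\vy|^{-1} d^3\vy = 2\pi t^2$ together with $|K|\le\|K\|_\infty$ it is bounded by $(2\pi)^2 \|K\|_\infty^2\, t_1^2 t_2^2$. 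This leaves the weighted integral of $|\psi|^2$ at the shifted times, over which we also integrate in $\vx_1,\vx_2$. Exchanging the order of integration, substituting $\vy_i = \vx_i - \vx_i'$, and passing to spherical coordinates, the crucial simplification is the radial change of variables $s_i = t_i - |\vy_i|$: since $d^3\vy_i$ contributes a factor $r_i^2$, one power cancels the Green's-function singularity $1/r_i$ and the remaining $r_i\,dr_i = (t_i - s_i)\,ds_i$ turns the spatial integral into $(4\pi)^2\int_0^{t_1}\!\int_0^{t_2}(t_1-s_1)(t_2-s_2)\,\|\psi(s_1,\cdot,s_2,\cdot)\|_{L^2}^2\,ds_1\,ds_2$ (the integration over $\vx_1',\vx_2'$ reassembles the $L^2$-norm at time $(s_1,s_2)$, and $0 \le s_i \le t_i \le T$ keeps us inside the domain). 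Collecting the constants, $\tfrac{\lambda^2}{(4\pi)^4}(2\pi)^2(4\pi)^2 = \tfrac{\lambda^2}{4}$, so one arrives at
\[
  \|(\widehat L\psi)(t_1,\cdot,t_2,\cdot)\|_{L^2}^2 \le \frac{\lambda^2}{4}\,\|K\|_\infty^2\,(t_1t_2)^2 \int_0^{t_1}\!\!\int_0^{t_2}(t_1-s_1)(t_2-s_2)\,\|\psi(s_1,\cdot,s_2,\cdot)\|_{L^2}^2\,ds_1\,ds_2,
\]
which is \emph{identical in form} to the key estimates \eqref{eq:hilfsformelind1dl2} and \eqref{eq:hilfsformelind2dl2} of the $d=1,2$ proofs.

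Because this estimate coincides with the $d=2$ case, the remainder of the argument is a repetition of the earlier reasoning: boundedness of $\widehat L$ (hence $\varphi_n \in \Banach_3$) follows by replacing $\|\psi(s_1,\cdot,s_2,\cdot)\|_{L^2}^2$ with $\|\psi\|_{\Banach_3}^2$; the same induction that produced \eqref{eq:2dinfind} now gives $\|\varphi_n(t_1,\cdot,t_2,\cdot)\|_{L^2}^2 \le (\lambda^2/4)^n\|\psi^\free\|_{\Banach_3}^2\|K\|_\infty^{2n}(t_1t_2)^{4n}/[(2n)!]^2$; the bound $\|\varphi_n\|_{\Banach_3} \le (|\lambda|/2)^n \|\psi^\free\|_{\Banach_3}\|K\|_\infty^n T^{4n}/(2n)!$ makes $\sum_n\|\varphi_n\|_{\Banach_3}$ convergent, and existence and uniqueness follow as in Thm.~\ref{thm:l2kernels}. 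I expect the only real obstacle to be the time-delay step described above; once the substitution $s_i = t_i - |\vy_i|$ is seen to convert the singular spatial kernel $1/|\vx_i-\vx_i'|$ into the familiar Volterra weight $(t_i - s_i)$, the proof presents no further difficulty. A minor point worth checking is the legitimacy of interpreting \eqref{eq:inteq3dsimplified} for general $\psi \in \Banach_3$ (the equation was derived assuming $\psi$ is a test function), but the estimates above already show that every term is well defined for $\psi \in \Banach_3$.
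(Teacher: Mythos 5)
Your proposal is correct and follows essentially the same route as the paper's proof: Cauchy--Schwarz against the weight $H(t_i-|\vx_i-\vx_i'|)/|\vx_i-\vx_i'|$, the substitution $\vy_i=\vx_i-\vx_i'$ with spherical coordinates so that the radial variable plays the role of a Volterra time variable, the identical key estimate $\|(\widehat L\psi)(t_1,\cdot,t_2,\cdot)\|_{L^2}^2\le\tfrac{\lambda^2}{4}\|K\|_\infty^2(t_1t_2)^2\int_0^{t_1}\!\int_0^{t_2}(t_1-s_1)(t_2-s_2)\|\psi(s_1,\cdot,s_2,\cdot)\|_{L^2}^2\,ds_1\,ds_2$, and the same induction and summation. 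The only thing the paper adds is a careful justification (choice of representative, Tonelli) that the a priori possibly infinite integrals at shifted times can be manipulated as you do, which is the ``minor point'' you already flagged.
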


\begin{proof}
	The strategy of the proof and the notation are the same as before. We demonstrate that the integral operator in \eqref{eq:inteq3dsimplified} defines a bounded operator on $\Banach_3$. Then we derive an estimate for $\| \varphi_n \|_{\Banach_3}$.
	
We begin again with estimates on the $L^2$ norm of $\widehat{L} \psi$ for arbitrary $\psi\in\Banach_3$ at given times, i.e., on
\begin{multline}
\| (\widehat{L} \psi)(t_1,\cdot,t_2,\cdot)\|_{L^2}^2 = \int d^3\vx_1 \, d^3 \vx_2 \, \Biggl| \frac{\lambda}{(4\pi)^2}  \int d^3 \vx_1' \, d^3 \vx_2'~  \frac{H(t_1-|\vx_1-\vx_1'|)}{|\vx_1-\vx_1'|}\, \frac{H(t_2-|\vx_2-\vx_2'|)}{|\vx_2-\vx_2'|} \\[2mm]
 \times~ K(t_1-|\vx_1-\vx_1'|,\vx_1',t_2-|\vx_2-\vx_2'|,\vx_2') \,  \psi(t_1-|\vx_1-\vx_1'|,\vx_1',t_2-|\vx_2-\vx_2'|,\vx_2') \Biggr|^2\,.
\end{multline}
Here, we have as usual chosen an arbitrary representative of the given $\psi\in\Banach_3$. Due to the delay in the time arguments of $\psi$, it is not obvious whether $\psi$ with these arguments is even square-integrable as a function of $\vx_1'$ and $\vx_2'$, as $\psi$ was only assumed square-integrable for (almost all) fixed time arguments. So, the integral on the right-hand side might be $\infty$. Nevertheless, we can use the Cauchy-Schwarz inequality to obtain that
\begin{align}
	&\| (\widehat{L} \psi)(t_1,\cdot,t_2,\cdot)\|_{L^2}^2 \leq \frac{\lambda^2}{(4\pi)^4} \int d^3 \vx_1 \, d^3 \vx_2 \biggl[ \biggl( \int d^3 \vx_1' \,d^3 \vx_2' ~ \frac{H(t_1-|\vx_1-\vx_1'|)}{|\vx_1-\vx_1'|} \frac{H(t_2-|\vx_2-\vx_2'|)}{|\vx_2-\vx_2'|}\nonumber\\
  &\times |K|^2(t_1-|\vx_1-\vx_1'|,\vx_1',t_2-|\vx_2-\vx_2'|,\vx_2') \biggr) \biggl( \int d^3 \vx_1' \,d^3 \vx_2' ~ \frac{H(t_1-|\vx_1-\vx_1'|)}{|\vx_1-\vx_1'|} \frac{H(t_2-|\vx_2-\vx_2'|)}{|\vx_2-\vx_2'|} \nonumber\\
& \times |\psi|^2(t_1-|\vx_1-\vx_1'|,\vx_1',t_2-|\vx_2-\vx_2'|,\vx_2') \biggr) \biggr]\nonumber\\
&\leq \frac{\lambda^2}{(4\pi)^2} \| K \|_\infty^2 \frac{(t_1t_2)^2}{4} \int d^3 \vx_1 \, d^3 \vx_2 \, d^3 \vx_1' \, d^3 \vx_2'~ \frac{H(t_1-|\vx_1-\vx_1'|)}{|\vx_1-\vx_1'|} \frac{H(t_2-|\vx_2-\vx_2'|)}{|\vx_2-\vx_2'|} \nonumber\\
&\times |\psi|^2(t_1-|\vx_1-\vx_1'|,\vx_1',t_2-|\vx_2-\vx_2'|,\vx_2').
\end{align}
Now we change the order of integration, substitute the integration variables $\vx_i$ by $\vy_i = \vx_i-\vx_i'$ (the Jacobi determinant is 1), and change the order of integration back. (Since the integrand is non-negative, we can do this by virtue of Tonelli's theorem even if we do not know whether the integral is finite.) This leads to
\begin{align}
	&\| (\widehat{L} \psi)(t_1,\cdot,t_2,\cdot)\|_{L^2}^2 \leq \frac{\lambda^2}{(4\pi)^2} \| K \|_\infty^2 \frac{(t_1t_2)^2}{4} \int d^3 \vy_1 \, d^3 \vy_2 \, d^3 \vx_1' \, d^3 \vx_2'~\frac{H(t_1-|\vy_1|)}{|\vy_1|} \frac{H(t_2-|\vy_2|)}{|\vy_2|} \nonumber\\
&\times |\psi|^2(t_1-|\vy_1|,\vx_1',t_2-|\vy_2|,\vx_2') \nonumber\\
 &= \frac{\lambda^2}{(4\pi)^2} \| K \|_\infty^2 \frac{(t_1t_2)^2}{4} \int d^3 \vy_1 \, d^3 \vy_2 ~ \frac{H(t_1-|\vy_1|)}{|\vy_1|} \frac{H(t_2-|\vy_2|)}{|\vy_2|} \: \| \psi(t_1-|\vy_1|,\cdot,t_2-|\vy_2|, \cdot) \|_{L^2}^2 \nonumber\\
&= \lambda^2 \: \| K \|_\infty^2 \frac{(t_1t_2)^2}{4} \int_0^{t_1} dr_1 \, \int_0^{t_2} dr_2 ~ r_1 r_2 \, \| \psi(t_1-r_1,\cdot,t_2-r_2, \cdot) \|_{L^2}^2
\label{eq:hilfsformelind3d}
\end{align}
for all $(t_1,t_2)$. Since for almost all $(r_1,r_2)$,  $\| \psi(t_1-r_1,\cdot,t_2-r_2,\cdot) \|_{L^2} \leq \| \psi \|_{\Banach_3}$, we have that
\begin{align}
\| (\widehat{L} \psi)(t_1,\cdot,t_2,\cdot)\|_{L^2}^2 
&\leq \lambda^2 \: \| K \|_\infty^2 \frac{(t_1t_2)^2}{4} \int_0^{t_1} dr_1 \, \int_0^{t_2} dr_2 ~ r_1 r_2 \, \| \psi \|^2_{\Banach_3} \nonumber\\
&= \lambda^2 \: \| K \|_\infty^2 \frac{(t_1t_2)^2}{4} \| \psi \|^2_{\Banach_3} \frac{(t_1t_2)^2}{4} 
\label{eq:aux3ineq}
\end{align}
for all $(t_1,t_2)$ (and all representatives of $\psi$). So now we know that the left-hand side is finite.
Moreover,
\be
	\| \widehat{L} \psi \|^2_{\Banach_3} \leq \esssup_{t_1,t_2 \in [0,T]} \lambda^2 \, \| K \|_\infty^2 \frac{(t_1t_2)^4}{16} \| \psi \|^2_{\Banach_3}  = \lambda^2\, \| K \|_\infty^2 \frac{T^8}{16} \, \| \psi \|^2_{\Banach_3} \:.
\ee
This shows that $\widehat{L}$ is a bounded operator $\Banach_3 \to \Banach_3$. Hence, $\varphi_n \in  \Banach_3 ~ \forall n \in \N_0$.

We now prove the following estimate for the norm of $\varphi_n$ by induction over $n \in \N_0$:
\be
	\| \varphi_n(t_1,\cdot,t_2,\cdot) \|^2_{L^2} \leq \| \psi^\free \|^2_{\Banach_3} \: \frac{\lambda^{2n} \, \| K \|_\infty^{2n}}{4^n} \, \frac{(t_1t_2)^{4n}}{[(2n)!]^2}
	\label{eq:3dinfindl2}
\ee
for all $(t_1,t_2)$.
For $n=0$, this obviously holds. So let \eqref{eq:3dinfindl2} be true for some $n \in \N_0$. Then, plugging \eqref{eq:3dinfindl2} into \eqref{eq:hilfsformelind3d}, we obtain that
\begin{align}
	&\| \varphi_n(t_1,\cdot,t_2,\cdot)\|_{L^2}^2\nonumber\\
& \leq \| \psi^\free \|^2_{\Banach_3}  \frac{\lambda^{2(n+1)} \, \| K \|_\infty^{2(n+1)}}{4^{n+1}}  (t_1t_2)^2 \int_0^{t_1} dr_1 \, \int_0^{t_2} dr_2 ~ r_1 r_2 \frac{(t_1-r_1)^{4n}(t_2-r_2)^{4n}}{[(2n)!]^2} \nonumber\\
 &= \| \psi^\free \|^2_{\Banach_3} \frac{\lambda^{2(n+1)} \, \| K \|_\infty^{2(n+1)}}{4^{n+1}}  (t_1t_2)^2 \int_0^{t_1} d \rho_1 \, \int_0^{t_2} d \rho_2 ~ (t_1-\rho_1) (t_2-\rho_2) \frac{\rho_1^{4n} \rho_2^{4n}}{[(2n)!]^2}\nonumber\\
&= \| \psi^\free \|^2_{\Banach_3}  \frac{\lambda^{2(n+1)} \, \| K \|_\infty^{2(n+1)}}{4^{n+1}} \, \frac{(t_1t_2)^{4(n+1)}}{[(2n)!]^2} \, \frac{1}{[(4n+1)(4n+2)]^2}\nonumber\\
&\leq \| \psi^\free \|^2_{\Banach_3}  \frac{\lambda^{2(n+1)} \, \| K \|_\infty^{2(n+1)}}{4^{n+1}} \, \frac{(t_1t_2)^{4(n+1)}}{[(2(n+1))!]^2}.
\end{align}
This proves \eqref{eq:3dinfindl2}. In particular, it follows that
\be
	\| \varphi_n \|_{\Banach_3} \leq \| \psi^\free \|_{\Banach_3} \, \frac{|\lambda|^n \, \| K \|^n_\infty}{2^n} \frac{T^{4n}}{(2n)!}.
\ee
This shows that $\sum_i \| \varphi_i \|_{\Banach_3}$ converges. As before, we conclude that $\sum_i \varphi_i \in \Banach_3$ is the unique solution of \eqref{eq:inteq3dsimplified}. \qed
\end{proof}

\paragraph{Remarks.}
\begin{enumerate}
	\item Interestingly, all the main estimates are the same in dimensions $d=1,2,3$, although the integrations leading there were rather different.
	\item In a similar way as in the proofs of Thms.~\ref{thm:1dboundedkernel}-\ref{thm:3dboundedkernel}, one can show the existence and uniqueness of a solution $\psi \in L^\infty \big([0,T]^2\times \R^{2d} \big)$ (for the respective $d$ of \eqref{eq:inteq1dsimplified}-\eqref{eq:inteq3dsimplified}). In combination with Thms.~\ref{thm:1dboundedkernel}-\ref{thm:3dboundedkernel}, we then find that if $\psi^\free \in L^\infty \big([0,T]^2\times \R^{2d} \big) \cap \Banach_d$, then also $\psi \in L^\infty \big([0,T]^2\times \R^{2d} \big)\cap \Banach_d$.
\end{enumerate}

\subsection{Singular interaction kernels} \label{sec:singularkernels}

In $d=2$ and $d=3$, the physically natural interaction kernels are singular (see Eqs.~\eqref{eq:inteq2d}, \eqref{eq:inteq3d}). The main difficulty about this is that the singularities of the interaction kernel and of the Green's functions are connected. In the following, we show a possible way to deal with such connected singularities. However, compared to the physically natural cases, we still make simplifications. In $d=3$, the reason for these simplifications is that the $\delta$-functions in the interaction kernel lead to complicated weight functions. In $d=2$, the Green's functions and the interaction kernel are simply too singular in order for our strategy to work without modifications.

\paragraph{Modified singular integral equation in $d=3$.}
We consider the integral equation
\begin{align}
	\psi(t_1,\vx_1,t_2,\vx_2) = \psi^\free(t_1,\vx_1,t_2,\vx_2) + \frac{\lambda}{(4 \pi)^2} \int_0^{t_1} dt_1' \int d^3 \vx_1' \int_0^{t_2} dt_2' \int d^3 \vx_2' ~ \nonumber\\
 \frac{\delta(t_1-t_1'-|\vx_1-\vx_1'|)}{|\vx_1-\vx_1'|} \frac{\delta(t_2-t_2'-|\vx_2-\vx_2'|)}{|\vx_2-\vx_2'|} \, \frac{f(t_1',\vx_1',t_2',\vx_2')}{|\vx_1'-\vx_2'|} \, \psi(t_1',\vx_1',t_2',\vx_2'),
	\label{eq:3dsingm0}
\end{align}
where $f$ is smooth and bounded. Eq.~\eqref{eq:3dsingm0} imitates the structure of the integral equation \eqref{eq:inteq3dsimplified} for $d=3$ and $m_1=m_2=0$. The difference is that we have replaced the physically natural interaction kernel
\be
	\frac{1}{2\pi} \delta\bigl((t_1'-t_2')-|\vx_1'-\vx_2'|^2 \bigr) = \frac{1}{4 \pi\, |\vx_1'-\vx_2'|} \biggl[\delta(t_1'-t_2'-|\vx_1'-\vx_2'|) + \delta(t_1'-t_2'+|\vx_1'-\vx_2'|)\biggr]
\ee
by $f(t_1',\vx_1',t_2',\vx_2')/|\vx_1'-\vx_2'|$.

By integrating out the delta functions, \eqref{eq:3dsingm0} can be rewritten as
\begin{align}
	&\psi(t_1,\vx_1,t_2,\vx_2) = \psi^\free(t_1,\vx_1,t_2,\vx_2) + \frac{\lambda}{(4 \pi)^2} \int d^3 \vx_1'\, d^3 \vx_2' ~\frac{H(t_1-|\vx_1-\vx_1'|)}{|\vx_1-\vx_1'|} \frac{H(t_2-|\vx_2-\vx_2'|)}{|\vx_2-\vx_2'|}  \nonumber\\
 &\times \frac{f(t_1-|\vx_1-\vx_1'|,\vx_1',t_2-|\vx_2-\vx_2'|,\vx_2')}{|\vx_1'-\vx_2'|} \, \psi(t_1-|\vx_1-\vx_1'|,\vx_1',t_2-|\vx_2-\vx_2'|,\vx_2').
	\label{eq:3dsingm0b}
\end{align}

\begin{theorem} \label{thm:singularkernel3d}
	For every bounded $f:\R^8\to \C$ and every $\psi^\free \in \Banach_3$, \eqref{eq:3dsingm0b} has a unique solution $\psi \in \Banach_3$.
\end{theorem}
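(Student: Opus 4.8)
The plan is to run the iteration scheme of the proof of Theorem~\ref{thm:3dboundedkernel} almost verbatim, since \eqref{eq:3dsingm0b} differs from \eqref{eq:inteq3dsimplified} only through the additional singular factor $1/|\vx_1'-\vx_2'|$ in the kernel. So I would again put $f_0=\psi^\free$, define the iterates $f_n=\psi^\free+\widehat L f_{n-1}$ and the differences $\varphi_n=f_n-f_{n-1}$ (with $\varphi_0=\psi^\free$ and $\varphi_{n+1}=\widehat L\varphi_n$), and reduce the whole theorem to one new pointwise estimate of the form $\|(\widehat L\psi)(t_1,\cdot,t_2,\cdot)\|_{L^2}^2\le M\int_0^{t_1}dr_1\int_0^{t_2}dr_2\,r_1r_2\,\|\psi(t_1-r_1,\cdot,t_2-r_2,\cdot)\|_{L^2}^2$, i.e.\ with exactly the Volterra structure of \eqref{eq:hilfsformelind3d}. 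Once this is available, boundedness of $\widehat L$ on $\Banach_3$, the inductive bound on $\|\varphi_n\|_{\Banach_3}$, summability of the Neumann series, and uniqueness all go through as in Theorems~\ref{thm:l2kernels} and~\ref{thm:3dboundedkernel}, and I would not repeat them.

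The decisive point is how to place the factor $1/|\vx_1'-\vx_2'|$ when applying Cauchy--Schwarz to the inner $(\vx_1',\vx_2')$-integral. The right choice is the asymmetric split that sends the \emph{entire} singular factor into the kernel bracket while leaving $|\psi|^2$ weighted only by the two Green's-function factors: writing the integrand as $A\cdot B$ with (abbreviating $H_i:=H(t_i-|\vx_i-\vx_i'|)$)
\[
A=\Bigl(\tfrac{H_1}{|\vx_1-\vx_1'|}\,\tfrac{H_2}{|\vx_2-\vx_2'|}\Bigr)^{1/2}\frac{|f|}{|\vx_1'-\vx_2'|},\qquad B=\Bigl(\tfrac{H_1}{|\vx_1-\vx_1'|}\,\tfrac{H_2}{|\vx_2-\vx_2'|}\Bigr)^{1/2}\psi,
\]
the second bracket $\int|B|^2$ is \emph{identical} to the one in the proof of Theorem~\ref{thm:3dboundedkernel}, so after the substitution $\vy_i=\vx_i-\vx_i'$ and radial integration it reduces to $\int_0^{t_1}dr_1\int_0^{t_2}dr_2\,r_1r_2\,\|\psi(t_1-r_1,\cdot,t_2-r_2,\cdot)\|_{L^2}^2$ exactly as in \eqref{eq:hilfsformelind3d} (Tonelli again legitimizing the interchanges for the nonnegative integrand). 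I would deliberately avoid the symmetric split, which leaves $1/|\vx_1'-\vx_2'|$ multiplying $|\psi|^2$: that weight cannot be dominated by $\|\psi\|_{L^2}^2$, since $1/|\vx_1'-\vx_2'|$ is an unbounded multiplier on $L^2(\R^6)$.

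The hard part will be the first bracket $\int|A|^2$, which now carries $|f|^2/|\vx_1'-\vx_2'|^2\le\|f\|_\infty^2/|\vx_1'-\vx_2'|^2$; everything hinges on controlling
\[
\Phi(\vx_1,\vx_2;t_1,t_2):=\int_{|\vx_1'-\vx_1|<t_1}\int_{|\vx_2'-\vx_2|<t_2}\frac{d^3\vx_1'\,d^3\vx_2'}{|\vx_1-\vx_1'|\,|\vx_2-\vx_2'|\,|\vx_1'-\vx_2'|^2}
\]
\emph{uniformly} in $\vx_1,\vx_2$ by a constant depending only on $T$. This is the genuinely new analytic obstacle, and it is a \emph{borderline} singular integral: as $\vx_1'\to\vx_2$ the two singular points of the inner $\vx_2'$-integrand merge into one of homogeneity degree $-3$, the critical exponent in $d=3$, so that inner integration is only logarithmically convergent and yields a bound of the type $C\bigl(1+\log^+(t_2/|\vx_1'-\vx_2|)\bigr)$ rather than a clean power of $t_2$. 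My plan is to make this precise by splitting the inner ball at $|\vx_1'-\vx_2'|\simeq|\vx_1'-\vx_2|$ to extract the single logarithm, and then to integrate the result against $1/|\vx_1-\vx_1'|$ over the outer ball; since $1/|\vx_1-\vx_1'|\in L^p_{\mathrm{loc}}$ for $p<3$ and $\log^+(t_2/|\cdot-\vx_2|)\in L^{p'}_{\mathrm{loc}}$, Hölder's inequality bounds the outer integral uniformly in $\vx_1,\vx_2$ (in the coincident case $\vx_1=\vx_2$ it is just $4\pi\int_0^{t_1}\rho\,\log^+(t_2/\rho)\,d\rho$). As $\Phi$ is nondecreasing in $t_1,t_2$, this gives a finite constant $\bar\Phi:=\sup_{\vx_1,\vx_2}\Phi(\vx_1,\vx_2;T,T)$ dominating the first bracket throughout $[0,T]^2$.

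With $\bar\Phi$ in hand the target estimate holds with $M=\lambda^2\|f\|_\infty^2\bar\Phi/(4\pi)^2$, and the induction closes exactly as before: the ansatz $\|\varphi_n(t_1,\cdot,t_2,\cdot)\|_{L^2}^2\le\|\psi^\free\|_{\Banach_3}^2\,M^n(t_1t_2)^{2n}/[(2n)!]^2$ is propagated using $\int_0^t(t-\rho)\rho^{2n}\,d\rho=t^{2n+2}/[(2n+1)(2n+2)]$ and $\prod_{k=0}^{n-1}(2k+1)(2k+2)=(2n)!$, giving $\|\varphi_n\|_{\Banach_3}\le\|\psi^\free\|_{\Banach_3}M^{n/2}T^{2n}/(2n)!$ and hence $\sum_n\|\varphi_n\|_{\Banach_3}<\infty$. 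The sum $\sum_n\varphi_n\in\Banach_3$ is then the desired solution, and uniqueness follows from $g=\widehat L g\Rightarrow\|g\|_{\Banach_3}\le M^{n/2}T^{2n}\|g\|_{\Banach_3}/(2n)!\to0$, just as in Theorem~\ref{thm:l2kernels}.
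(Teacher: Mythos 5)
Your proposal is correct, but the one step that is genuinely new in this theorem --- how to apportion the singular factor $1/|\vx_1'-\vx_2'|$ under Cauchy--Schwarz --- is handled by a different decomposition than the paper's, and the comparison is instructive. You keep the split of Theorem~\ref{thm:3dboundedkernel} intact, putting $\bigl(H_1H_2/(|\vx_1-\vx_1'||\vx_2-\vx_2'|)\bigr)^{1/2}$ in \emph{both} factors and loading the entire interaction singularity onto the $f$-bracket; the price is that this bracket becomes the borderline integral
\[
\Phi=\int\!\!\int\frac{d^3\vx_1'\,d^3\vx_2'}{|\vx_1-\vx_1'|\,|\vx_2-\vx_2'|\,|\vx_1'-\vx_2'|^2}
\]
over the two balls, whose inner $\vx_2'$-integration produces a logarithm $\log^+(t_2/|\vx_1'-\vx_2|)$ as the two singular points merge, so you need the extra splitting-plus-H\"older argument to get a uniform constant $\bar\Phi$. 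The paper instead uses the asymmetric split $A=(H_1H_2)^{1/2}\,|f|/|\vx_1'-\vx_2'|$, $B=\bigl(H_1H_2/(|\vx_1-\vx_1'|^2|\vx_2-\vx_2'|^2)\bigr)^{1/2}\psi$: the $f$-bracket then contains only $\|f\|_\infty^2/|\vx_1'-\vx_2'|^2$ over bounded balls, which is disposed of by the trivial split at $|\vx_1'-\vx_2'|=1$ (no logarithms, no borderline analysis), while the squared weights $1/|\vx_i-\vx_i'|^2$ are still radially integrable in $\R^3$ and yield the Volterra kernel $\int_0^{t_1}\!\int_0^{t_2}\|\psi(t_1-r_1,\cdot,t_2-r_2,\cdot)\|_{L^2}^2\,dr_1\,dr_2$ (without your $r_1r_2$ weight), closing the induction with $(t_1t_2)^n/(n!)^2$ instead of your $(t_1t_2)^{2n}/[(2n)!]^2$. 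Both routes are valid and give the same conclusion; yours preserves the exact estimate shape of Theorem~\ref{thm:3dboundedkernel} at the cost of a harder singular-integral lemma, the paper's trades a slightly weaker factorial gain for an essentially computation-free bound on the singular bracket. One small caution if you write yours up: at the exceptional (measure-zero) coincidence $\vx_1'=\vx_2$ the inner integral actually diverges, so the statement to prove is the pointwise bound $C\bigl(1+\log^+(t_2/|\vx_1'-\vx_2|)\bigr)$ for $\vx_1'\neq\vx_2$ followed by integrability of the logarithm against $d^3\vx_1'/|\vx_1-\vx_1'|$, exactly as your H\"older step does.
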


\begin{proof}
	The proof is structured as before. We prove that the integral operator in \eqref{eq:3dsingm0} defines a bounded operator $\widehat{L}$ on $\Banach_3$. Then we derive an estimate for $\| \varphi_n \|$.

For arbitrary $\psi \in \Banach_3$, \eqref{eq:3dsingm0b} and the Cauchy-Schwarz inequality yield that
\begin{align}
	&\| (\widehat{L} \psi)(t_1,\cdot,t_2,\cdot)\|^2_{L^2} \leq \frac{\lambda^2}{(4\pi)^4} \int d^3 \vx_1 \, d^3 \vx_2 \biggl[ \biggl( \int d^3 \vx_1' \, d^3 \vx_2' ~\frac{H(t_1-|\vx_1-\vx_1'|)}{|\vx_1-\vx_1'|^2} \frac{H(t_2-|\vx_2-\vx_2'|)}{|\vx_2-\vx_2'|^2} \nonumber\\
& \times |\psi|^2\Bigl(t_1-|\vx_1-\vx_1'|,\vx_1',t_2-|\vx_2-\vx_2'|,\vx_2'\Bigr)\biggr) \nonumber\\
&\times \biggl( \int d^3 \vx_1' \, d^3 \vx_2' ~H(t_1-|\vx_1-\vx_1'|) H(t_2-|\vx_2-\vx_2'|)\frac{|f|^2\bigl(t_1-|\vx_1-\vx_1'|,\vx_1',t_2-|\vx_2-\vx_2'|,\vx_2'\bigr)}{|\vx_1'-\vx_2'|^2} \biggr) \biggr].
\label{eq:3dsingcalc1}
\end{align}
We first consider the integral $I$ in the second round bracket and split it up into $I= I_1 + I_2$ where $I_1$ is the part with $|\vx_1'-\vx_2'| \leq 1$ and $I_2$ the part with $|\vx_1'-\vx_2'| > 1$. For the first part, we find, replacing $|f|$ with $\|f\|_\infty^2$ and leaving out the second Heaviside function:
\begin{align}
	I_1 &\leq \|f \|_\infty^2 \int\limits_{|\vx_1'-\vx_2'| \leq 1} d^3 \vx_1' \, d^3 \vx_2' ~H(t_1-|\vx_1-\vx_1'|) \, \frac{1}{|\vx_1'-\vx_2'|^2}\nonumber\\
	& =  \|f \|_\infty^2 \int d^3 \vx_1' \int_{|\vy| \leq 1} d^3 \vy ~H(t_1-|\vx_1-\vx_1'|) \,\frac{1}{|\vy|^2} \nonumber\\
 &=  \|f \|_\infty^2 \left( \int d^3 \vx_1' ~H(t_1-|\vx_1-\vx_1'|) \right) \left( \int_{|\vy| \leq 1} \frac{d^3 \vy}{|\vy|^2} \right)\nonumber\\
&=  \|f \|_\infty^2 \, \frac{4\pi t_1^3}{3} \, 4\pi \leq \|f \|_\infty^2 \,\frac{(4\pi)^2 \, T^3 }{3}.
\end{align}
For $I_2$, we obtain:
\begin{align}
	I_2 &\leq \|f \|_\infty^2 \int_{|\vx_1'-\vx_2'| > 1} d^3 \vx_1' \, d^3 \vx_2' ~H(t_1-|\vx_1-\vx_1'|) H(t_2-|\vx_2-\vx_2'|) \, \frac{1}{|\vx_1'-\vx_2'|^2}\nonumber\\
	&\leq \| f \|_\infty^2  \int d^3 \vx_1' \, d^3 \vx_2' ~H(t_1-|\vx_1-\vx_1'|) H(t_2-|\vx_2-\vx_2'|)\nonumber\\
	&= \| f \|_\infty^2  \,\left(\frac{4\pi}{3}\right)^2 \, (t_1 t_2)^3 \leq  \| f \|_\infty^2  \,\left(\frac{4\pi}{3}\right)^2 \, T^6.
\end{align}
Hence, $I \leq \|f \|^2_\infty \, \frac{(4\pi)^2}{3} (T^3 + T^6)$.

Thus, replacing the second round bracket in \eqref{eq:3dsingcalc1} by this bound for $I$, we obtain:
\begin{align}
	&\| (\widehat{L} \psi)(t_1,\cdot,t_2,\cdot)\|^2_{L^2} \leq \frac{\lambda^2}{(4\pi)^2} \, \| f \|_\infty^2  \, \frac{(T^3+T^6)}{3} \int d^3 \vx_1 \, d^3 \vx_2 \, d^3 \vx_1' \, d^3 \vx_2'~\frac{H(t_1-|\vx_1-\vx_1'|)}{|\vx_1-\vx_1'|^2}\nonumber\\
& ~~~~~\times \frac{H(t_2-|\vx_2-\vx_2'|)}{|\vx_2-\vx_2'|^2} \,  |\psi|^2(t_1-|\vx_1-\vx_1'|,\vx_1',t_2-|\vx_2-\vx_2'|,\vx_2')\nonumber\\
&= \frac{\lambda^2}{(4\pi)^2} \, \| f \|_\infty^2  \, \frac{(T^3+T^6)}{3}  \int d^3 \vy_1 \, d^3 \vy_2 \, d^3 \vx_1' \, d^3 \vx_2'~\frac{H(t_1-|\vy_1|)}{|\vy_1|^2} \frac{H(t_2-|\vy_2|)}{|\vy_2|^2} \nonumber\\
& ~~~~~\times  |\psi|^2(t_1-|\vy_1|,\vx_1',t_2-|\vy_2|,\vx_2') \nonumber\\
&=  \lambda^2 \, \| f \|_\infty^2  \, \frac{(T^3+T^6)}{3} \int_0^{t_1} dr_1 \int_{0}^{t_2} dr_2 ~ \| \psi(t_1-r_1,\cdot,t_2-r_2,\cdot) \|_{L^2}^2\nonumber\\
&= \lambda^2 \,  \| f \|_\infty^2  \, \frac{(T^3+T^6)}{3} \int_0^{t_1} d\rho_1 \int_{0}^{t_2} d\rho_2 ~ \| \psi(\rho_1,\cdot,\rho_2,\cdot) \|_{L^2}^2.
\label{eq:hilfsformel3dsing}
\end{align}
In particular, using $\| \psi(\rho_1,\cdot,\rho_2,\cdot) \|_{L^2}^2 \leq \| \psi\|^2_{\Banach_3}$ for almost every $(\rho_1,\rho_2)$, this shows that
\be
	\| \widehat{L} \psi \|_{\Banach_3} \leq |\lambda| \: \| f \|_\infty \, \frac{(T^5+T^8)^{1/2}}{\sqrt{3}} \,  \|\psi \|_{\Banach_3} \:.  
\ee
Hence, the integral operator $\widehat{L}$ is bounded, and $\varphi_n \in \Banach_3\, \forall n \in \N_0$.

We now turn to the estimate of $\| \varphi_n(t_1,\cdot,t_2,\cdot) \|_{L^2}$. We shall prove by induction over $n \in \N_0$:
\be
	\| \varphi_n(t_1,\cdot,t_2,\cdot) \|_{L^2}^2 \leq \left( \lambda^2 \,  \| f \|_\infty^2  \, \frac{(T^3+T^6)}{3} \right)^n \, \frac{(t_1t_2)^n}{(n!)^2} \, \| \psi^\free \|^2_{\Banach_3}.
	\label{eq:3dsingind}
\ee
For $n=0$, this obviously holds. So let \eqref{eq:3dsingind} be true for some $n \in \N_0$. Plugging \eqref{eq:3dsingind} into \eqref{eq:hilfsformel3dsing} for $\psi = \varphi_n$, we find:
\begin{align}
	&\| \varphi_{n+1}(t_1,\cdot,t_2,\cdot) \|_{L^2}^2 \leq \left(\lambda^2 \,  \| f \|_\infty^2  \, \frac{(T^3+T^6)}{3} \right)^{n+1} \int_0^{t_1} d\rho_1 \int_{0}^{t_2} d\rho_2 ~ \frac{(\rho_1 \rho_2)^n}{(n!)^2}\, \| \psi^\free \|^2\nonumber\\
	&=  \left( \lambda^2 \,  \| f \|_\infty^2  \, \frac{(T^3+T^6)}{3}\right)^{n+1} \frac{(t_1 t_2)^{n+1}}{((n+1)!)^2}\, \| \psi^\free \|^2.
\end{align}
This proves \eqref{eq:3dsingind}. In particular,  \eqref{eq:3dsingind} implies:
\be
	\| \varphi_n \|_{\Banach_3} \leq |\lambda| \:  \| f \|_\infty  \, \frac{(T^3+T^6)^{1/2}}{\sqrt{3}} \: \frac{T^n}{n!} \: \| \psi^\free \|_{\Banach_3}.
\ee
This shows that $\sum_i \| \varphi_i \|$ converges. As before, we conclude that $\sum_i \varphi_i \in \Banach_3$ is the unique solution of \eqref{eq:3dsingm0b}. \qed
\end{proof}

\begin{remark}
	Splitting the singularities of $G_1,G_2$ and $K$ via the Cauchy-Schwarz inequality does not work for the physically natural equation \eqref{eq:inteq2d} in $d=2$. The reason is that the integral $\int_0^t dt' \int d^2 \vx' ~\frac{H(t'-|\vx'|)}{(t')^2-|\vx'|^2}$ diverges. However, we can treat a problem with $[(t')^2-|\vx'|^2]^{-\alpha/2}$ with $\alpha < 1$ instead of $[(t')^2-|\vx'|^2]^{-1/2}$.
\end{remark}

\paragraph{Modified singular integral equation in $d=2$.}

Let $0 < \alpha < 1$. The previous remark suggests to consider the following integral equation on $\Banach_2$:
\begin{align}
	&\psi(t_1,\vx_1,t_2,\vx_2) = \psi^\free(t_1,\vx_1,t_2,\vx_2) + \frac{\lambda}{(2\pi)^3} \int_0^{t_1} dt_1' \int d^2 \vx_1' \int_0^{t_2} dt_2' \int d^2 \vx_2'~\nonumber\\
& \frac{H(t_1-t_1'-|\vx_1-\vx_1'|)}{[(t_1-t_1')^2 - |\vx_1-\vx_1'|^2]^{\alpha/2}} \cos\left( m_1\sqrt{(t_1-t_1')^2-|\vx_1-\vx_1'|^2}\right) \frac{H(t_2-t_2'-|\vx_2-\vx_2'|)}{[(t_2-t_2')^2 - |\vx_2-\vx_2'|^2]^{\alpha/2}} \nonumber\\
& \times  \cos\left( m_2\sqrt{(t_2-t_2')^2-|\vx_2-\vx_2'|^2}\right) \frac{H((t_1'-t_2')^2-|\vx_1'-\vx_2'|^2)}{[(t_1'-t_2')^2 - |\vx_1'-\vx_2'|^2]^{\alpha/2}}\, \psi(t_1',\vx_1',t_2',\vx_2').
\label{eq:2dsingalpha}
\end{align}

\begin{theorem} \label{thm:singularkernel2d}
	For every $\psi^\free \in \Banach_2$, \eqref{eq:2dsingalpha} has a unique solution $\psi \in \Banach_2$.
\end{theorem}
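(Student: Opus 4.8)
The plan is to follow the iteration scheme of Thms.~\ref{thm:l2kernels}--\ref{thm:singularkernel3d}: starting from $f_0 = \psi^\free$, define the iterates $f_n$ and the differences $\varphi_n = f_n - f_{n-1}$ (with $\varphi_0 = f_0$), let $\widehat L$ be the integral operator in \eqref{eq:2dsingalpha}, show $\widehat L$ is bounded on $\Banach_2$, establish a pointwise bound on $\|\varphi_n(t_1,\cdot,t_2,\cdot)\|_{L^2}$ by induction, and sum the series. The one new feature relative to Thm.~\ref{thm:2dboundedkernel} is that the interaction kernel is now singular as well, so all three factors of the form $[\,\cdot\,]^{-\alpha/2}$ must be controlled together; the cosines are bounded by $1$ and are harmless.

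The key device is the same Cauchy--Schwarz splitting used in Thm.~\ref{thm:singularkernel3d}. Write the integrand of $\widehat L\psi$ as $g\cdot h$ with $g = H_1 H_2\,[(t_1-t_1')^2-|\vx_1-\vx_1'|^2]^{-\alpha/2}[(t_2-t_2')^2-|\vx_2-\vx_2'|^2]^{-\alpha/2}\,\psi$ and $h = H_1 H_2\,H_{\rm int}\,[(t_1'-t_2')^2-|\vx_1'-\vx_2'|^2]^{-\alpha/2}$, where $H_1,H_2,H_{\rm int}$ abbreviate the Heaviside functions and I exploit $H_i^2 = H_i$ to place them in both factors. Cauchy--Schwarz gives $|(\widehat L\psi)(t_1,\vx_1,t_2,\vx_2)|^2 \le \frac{\lambda^2}{(2\pi)^6}\,I\cdot J$, where $J$ collects both \emph{Green's-function} singularities raised to the power $\alpha$ together with $|\psi|^2$, while $I$ isolates the \emph{interaction-kernel} singularity raised to the power $\alpha$; both integrals remain cut off by the Heaviside functions.

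Both integrations rest on the single identity
\be
	\int_{|\vx| < \tau} d^2\vx\, \frac{1}{(\tau^2 - |\vx|^2)^\alpha} = \frac{\pi}{1-\alpha}\,\tau^{2-2\alpha},
	\label{eq:identity2dsing}
\ee
which is finite \emph{precisely because} $\alpha < 1$ (the borderline $\alpha=1$ is exactly the divergence noted in the Remark above). In $I$ I drop $H_2\le 1$, bound $|t_1'-t_2'|\le T$, and apply \eqref{eq:identity2dsing} to the $\vx_2'$-integral and then the $\vx_1'$-integral; this bounds $I$ by a constant $C(T,\alpha)$ \emph{uniformly} in $\vx_1,\vx_2,t_1,t_2$ --- keeping the Heaviside cutoffs here is essential, as otherwise the translation invariance in $(\vx_1',\vx_2')$ would make $I$ diverge. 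Pulling $C(T,\alpha)$ out and performing the $\vx_1,\vx_2$-integrations in $J$ by \eqref{eq:identity2dsing} (justified for the nonnegative integrand by Tonelli, as in the other proofs) produces the time weights $(t_1-t_1')^{2-2\alpha}(t_2-t_2')^{2-2\alpha}$ and the recursive estimate
\be
	\|(\widehat L\psi)(t_1,\cdot,t_2,\cdot)\|_{L^2}^2 \le C \int_0^{t_1}\! dt_1'\int_0^{t_2}\! dt_2'~(t_1-t_1')^{2-2\alpha}(t_2-t_2')^{2-2\alpha}\,\|\psi(t_1',\cdot,t_2',\cdot)\|_{L^2}^2,
	\label{eq:hilfsformel2dsing}
\ee
the analogue of \eqref{eq:hilfsformelind2dl2}, with $C = C(\lambda,T,\alpha)$. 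Replacing the last norm by $\|\psi\|_{\Banach_2}$ and integrating gives boundedness of $\widehat L$ on $\Banach_2$.

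The main departure from the earlier proofs --- and the step I expect to require the most care --- is the induction, because the weight $(t-t')^{\beta}$ now carries the \emph{non-integer} exponent $\beta := 2-2\alpha$, so the iterated time integrals yield Beta functions rather than the clean factorials of Thms.~\ref{thm:1dboundedkernel}--\ref{thm:3dboundedkernel}. Setting $\mu := \beta+1 = 3-2\alpha$ and using $\int_0^t (t-s)^\beta s^{n\mu}\,ds = t^{(n+1)\mu}\,\Gamma(\mu)\Gamma(n\mu+1)/\Gamma((n+1)\mu+1)$, I expect the induction hypothesis
\be
	\|\varphi_n(t_1,\cdot,t_2,\cdot)\|_{L^2}^2 \le \|\psi^\free\|_{\Banach_2}^2\,\bigl(C\,\Gamma(\mu)^2\bigr)^n\,\frac{(t_1 t_2)^{n\mu}}{\Gamma(n\mu+1)^2},
\ee
the Gamma factors telescoping because $\Gamma(1)=1$. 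Taking the supremum over $[0,T]^2$ yields $\|\varphi_n\|_{\Banach_2}\le \|\psi^\free\|_{\Banach_2}\,r^n/\Gamma(n\mu+1)$ with $r = (C\,\Gamma(\mu)^2)^{1/2}T^{\mu}$, and $\sum_n r^n/\Gamma(n\mu+1)$ is a convergent Mittag-Leffler series (entire in $r$ for $\mu>0$). Hence $\sum_i \varphi_i$ converges in $\Banach_2$; that its limit solves \eqref{eq:2dsingalpha} and is unique then follows verbatim as in the proof of Thm.~\ref{thm:l2kernels}.
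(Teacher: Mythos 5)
Your proposal is correct and follows essentially the same route as the paper: the identical Cauchy--Schwarz splitting that pairs the two Green's-function singularities (each raised to power $\alpha$) with $|\psi|^2$ in one factor and isolates the interaction-kernel singularity (at power $\alpha$, still cut off by the Heaviside functions) in the other, the same identity $\int_{|\vx|<\tau}(\tau^2-|\vx|^2)^{-\alpha}\,d^2\vx=\tfrac{\pi}{1-\alpha}\tau^{2-2\alpha}$, and the same iteration/uniqueness scheme. The only differences are cosmetic: you bound the interaction bracket by integrating out $\vx_2'$ directly after dropping one Heaviside (the paper instead passes to relative and sum variables with spherical coordinates, reaching the same uniform constant), and you retain the fractional weights $(t_i-t_i')^{2-2\alpha}$ in the induction, obtaining a Beta-function recursion and a Mittag-Leffler series, whereas the paper simply bounds these weights by $T^{2-2\alpha}$ and recovers the cruder but sufficient $(t_1t_2)^n/(n!)^2$ estimate.
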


\begin{proof}
	The proof is structured like the previous ones. First we show that the integral operator $\widehat{L}$ in \eqref{eq:2dsingalpha} is a bounded operator on $\Banach_2$. Then we derive an estimate for the norm of $\varphi_n$ (defined analogously as before).

For the boundedness, we use \eqref{eq:2dsingalpha} and the Cauchy-Schwarz inequality to obtain:
\begin{align}
	&\| \widehat{L} \psi(t_1,\cdot,t_2,\cdot) \|^2_{L^2} \leq \frac{\lambda^2}{(2\pi)^6} \int d^2 \vx_1 \, d^2 \vx_2 \left[ \left( \int_0^{t_1} dt_1'  \int_0^{t_2} dt_2' \int d^2 \vx_1' \, d^2 \vx_2'~\right. \right. \nonumber\\
& \frac{H(t_1-t_1'-|\vx_1-\vx_1'|)}{[(t_1-t_1')^2 - |\vx_1-\vx_1'|^2]^\alpha} \left. \frac{H(t_2-t_2'-|\vx_2-\vx_2'|)}{[(t_2-t_2')^2 - |\vx_2-\vx_2'|^2]^\alpha} |\psi|^2(t_1',\vx_1',t_2',\vx_2')\right)\nonumber\\
&\times \left( \int_0^{t_1} dt_1'  \int_0^{t_2} dt_2' \int d^2 \vx_1' \, d^2 \vx_2'~ H(t_1-t_1'-|\vx_1-\vx_1'|) H(t_2-t_2'-|\vx_2-\vx_2'|) \right. \nonumber\\
& \left. \left. \times \frac{H((t_1'-t_2')^2-|\vx_1'-\vx_2'|^2)}{[(t_1'-t_2')^2 - |\vx_1'-\vx_2'|^2]^\alpha} \right) \right]
	\label{eq:2dsingalphacalc1}
\end{align}
We first estimate the expression in the second round bracket. 
Changing variables, $\vx_i' \rightarrow \vy_i = \vx_i'-\vx_i$, it becomes:
\be
	\int_0^{t_1} dt_1'  \int_0^{t_2} dt_2' \int d^2 \vy_1 \, d^2 \vy_2~ H(t_1-t_1'-|\vy_1|) H(t_2-t_2'-|\vy_2|) \frac{H((t_1'-t_2')^2-|\vy_1+\vx_1-\vy_2-\vx_2|^2)}{[(t_1'-t_2')^2 - |\vy_1+\vx_1-\vy_2-\vx_2|^2]^\alpha}
\ee
Changing variables another time, namely to $\vy = \vy_1+\vx_1-\vy_2-\vx_2$ and $\mathbf{Y} = \vy_1 + \vy_2$ (with the Jacobi determinant $\frac{1}{4}$) and then introducing spherical coordinates for $\vy$ and $\mathbf{Y}$, we see that the expression is smaller than or equal to
\begin{align}
	&\frac{(2\pi)^2}{4} \int_0^{t_1} dt_1'  \int_0^{t_2} dt_2' \int_0^{t_1-t_1'+t_2 - t_2'} dR \, R \int_0^{|t_1'-t_2'|} dr ~ \frac{r}{((t_1'-t_2')^2-r^2)^{\alpha}}\nonumber\\
&= \frac{(2\pi)^2}{4} \int_0^{t_1} dt_1'  \int_0^{t_2} dt_2' \int_0^{t_1-t_1'+t_2 - t_2'} dR \, R ~ \left[- \frac{1}{2(1-\alpha)}((t_1'-t_2')^2-r^2)^{1-\alpha}\right]_{0}^{|t_1'-t_2'|}\nonumber\\
&= \frac{(2\pi)^2}{4} \int_0^{t_1} dt_1'  \int_0^{t_2} dt_2' \int_0^{t_1-t_1'+t_2 - t_2'} dR \, R ~\frac{1}{2(1-\alpha)}|t_1'-t_2'|^{2(1-\alpha)}\nonumber\\
&= \frac{(2\pi)^2}{2^4(1-\alpha)} \int_0^{t_1} dt_1'  \int_0^{t_2} dt_2' ~(t_1-t_1'+t_2 - t_2')^2 \, |t_1'-t_2'|^{2(1-\alpha)}.
\end{align}
For our purposes, a crude upper bound is sufficient. To this end, we note $(t_1-t_1'+t_2 - t_2')^2 \leq (t_1+t_2)^2$ and $|t_1'-t_2'|^{2(1-\alpha)} \leq (t_1+t_2)^{2(1-\alpha)}$. Thus, we see that the previous expression is smaller than or equal to
\be
	\frac{(2\pi)^2}{2^4(1-\alpha)} (t_1+t_2)^2 \: (t_1+t_2)^{2(1-\alpha)} \: t_1 t_2 \leq  \frac{(2\pi)^2}{2^4(1-\alpha)} (t_1+t_2)^{6-2\alpha}.
\ee
With this result, \eqref{eq:2dsingalphacalc1} becomes:
\begin{align}
	&\| \widehat{L} \psi(t_1,\cdot,t_2,\cdot) \|^2_{L^2} \leq \lambda^2 \frac{(t_1+t_2)^{6-2\alpha}}{(2\pi)^4 \cdot 2^4(1-\alpha)}  \int_0^{t_1} dt_1'  \int_0^{t_2} dt_2' \int d^2 \vx_1 \, d^2 \vx_2  \, d^2 \vx_1' \, d^2 \vx_2' \nonumber\\
& \frac{H(t_1-t_1'-|\vx_1-\vx_1'|)}{[(t_1-t_1')^2 - |\vx_1-\vx_1'|^2]^\alpha} \frac{H(t_2-t_2'-|\vx_2-\vx_2'|)}{[(t_2-t_2')^2 - |\vx_2-\vx_2'|^2]^\alpha} |\psi|^2(t_1',\vx_1',t_2',\vx_2').
\end{align}
We now change variables, $\vx_i \rightarrow \vx_i-\vx_i' =: \vy_i$ (Jacobi determinant 1). This yields:
\begin{align}
	&\| \widehat{L} \psi(t_1,\cdot,t_2,\cdot) \|^2_{L^2} \leq \lambda^2 \frac{(t_1+t_2)^{6-2\alpha}}{(2\pi)^4 \cdot 2^4(1-\alpha)}  \int_0^{t_1} dt_1'  \int_0^{t_2} dt_2' \int d^2 \vy_1 \, d^2 \vy_2  \, d^2 \vx_1' \, d^2 \vx_2' \nonumber\\
&~~~ \frac{H(t_1-t_1'-|\vy_1|)}{[(t_1-t_1')^2 - |\vy_1|^2]^\alpha} \frac{H(t_2-t_2'-|\vy_2|)}{[(t_2-t_2')^2 - |\vy_2|^2]^\alpha} |\psi|^2(t_1',\vx_1',t_2',\vx_2') \nonumber\\
&= \lambda^2 \frac{(t_1+t_2)^{6-2\alpha}}{(2\pi)^4 \cdot 2^4(1-\alpha)}  \int_0^{t_1} dt_1'  \int_0^{t_2} dt_2' \int d^2 \vy_1 \, d^2 \vy_2 \nonumber\\
&~~~\frac{H(t_1-t_1'-|\vy_1|)}{[(t_1-t_1')^2 - |\vy_1|^2]^\alpha} \frac{H(t_2-t_2'-|\vy_2|)}{[(t_2-t_2')^2 - |\vy_2|^2]^\alpha} \, \| \psi(t_1',\cdot,t_2',\cdot)\|^2_{L^2}\nonumber\\
&= \lambda^2 \frac{(t_1+t_2)^{6-2\alpha}}{(2\pi)^2 \cdot 2^6(1-\alpha)^3}  \int_0^{t_1} dt_1'  \int_0^{t_2} dt_2'~ (t_1-t_1')^{2(1-\alpha)}(t_2-t_2')^{2(1-\alpha)} \| \psi(t_1',\cdot,t_2',\cdot)\|^2_{L^2}\nonumber\\
&\leq \lambda^2 \frac{(2T)^{10-4\alpha}}{(2\pi)^2 \cdot 2^6(1-\alpha)^3} \int_0^{t_1} dt_1'  \int_0^{t_2} dt_2'~\| \psi(t_1',\cdot,t_2',\cdot)\|^2_{L^2}.
\label{eq:2dsingalphahilfsformel}
\end{align}
Using $\| \psi(t_1',\cdot,t_2',\cdot)\|^2_{L^2} \leq \| \psi \|^2$, we deduce:
\begin{align}
	\| \widehat{L} \psi \|^2 \leq \lambda^2 \frac{(2T)^{12-4\alpha}}{(2\pi)^2 \cdot 2^6(1-\alpha)^3} \, \| \psi \|^2.
\end{align}
This shows that $\widehat{L}$ indeed is a bounded operator on $\Banach_2$. Next, we prove the following estimate by induction over $n \in \N_0$:
\be
	\| \varphi_n(t_1,\cdot,t_2,\cdot) \|^2_{L^2} \leq \| \psi^\free\|^2 \, \left( \frac{\lambda^2 \cdot (2T)^{12-4\alpha}}{(2\pi)^2 \cdot 2^6(1-\alpha)^3} \right)^n \, \frac{(t_1t_2)^n}{(n!)^2}.
	\label{eq:2dsingalphaind}
\ee
For $n=0$, \eqref{eq:2dsingalphaind} obviously holds. So let  \eqref{eq:2dsingalphaind} be true for some $n \in \N_0$. Now $\varphi_{n+1} = \widehat{L} \varphi_n$ and thus plugging \eqref{eq:2dsingalphaind} into \eqref{eq:2dsingalphahilfsformel} yields:
\begin{align}
	\| \widehat{L} \varphi_{n+1}(t_1,\cdot,t_2,\cdot) \|^2_{L^2} &\leq \| \psi^\free \|^2 \, \left( \frac{\lambda^2 \cdot (2T)^{12-4\alpha}}{(2\pi)^2 \cdot 2^6(1-\alpha)^3} \right)^{n+1} \int_0^{t_1} dt_1'  \int_0^{t_2} dt_2'~\frac{(t_1't_2')^n}{(n!)^2}\nonumber\\
&= \| \psi^\free \|^2 \, \left( \frac{\lambda^2 \cdot (2T)^{12-4\alpha}}{(2\pi)^2 \cdot 2^6(1-\alpha)^3} \right)^{n+1} \, \frac{(t_1t_2)^{n+1}}{[(n+1)!]^2}.
\end{align}
This proves \eqref{eq:2dsingalphaind}. In particular, \eqref{eq:2dsingalphaind} implies:
\be
	\| \varphi_n \|_{\Banach_2} \leq  \frac{\| \psi^\free\|_{\Banach_2}}{n!} \, \left( \frac{\lambda^2 \cdot (2T)^{14-4\alpha}}{(2\pi)^2 \cdot 2^6(1-\alpha)^3} \right)^{n/2}.
\ee
This bound shows that $\sum_i \| \varphi_i \|$ converges. Hence $\sum_i \varphi_i$ converges in $\Banach_2$, and analogously to before it follows that this series is, in fact, the unique solution of \eqref{eq:2dsingalpha}. \qed	
\end{proof}

\section{Conclusions} \label{sec:conclusion}

\paragraph{Summary.}
In this paper, we have provided the first existence and uniqueness results for certain classes of multi-time integral equations of the form \eqref{eq:inteq}. We have focused on the case that the Green's functions in \eqref{eq:inteq} are retarded Green's functions of the Klein-Gordon equation. It was then demonstrated that assuming a beginning of time (which seems plausible in view of the Big Bang of our universe), these integral equations attain a Volterra-type structure. Hence, the time integrals run only from 0 to $t_i$, $i=1,2$. However, compared to the standard cases for multi-dimensional Volterra integral equations, the multi-time integral equations in this paper show several new features that necessitate a novel treatment: (a)~combined space and time integrals occur, (b)~the integral kernels are not square integrable, (c)~the kernels are singular (except for $d=1$). This singular behavior manifests itself in two aspects:  (i) the Green's functions are singular and (ii) the interaction kernel is singular as well. The fact that these two types of singularities are connected makes them particularly challenging.

We were able to give results covering each of these features (a)--(c) for integral equations that are simplified compared to the physically natural cases. (In $d=1$, however, the physically natural case is covered.) The simplifications were introduced with care in order not to deviate overly from these natural cases. In particular, arbitrary but bounded interaction kernels have been covered. We also proved some results for singular interaction kernels which, in a certain sense, approximate the physically natural ones. This shows that it is, in principle, possible to deal with the above-mentioned connected singularities.

\paragraph{Discussion.}

In the context of other equations involving time delay, such as delay differential equations, our results may appear surprising. For these equations, it is notoriously hard to prove the existence and uniqueness of solutions. However, our equations involve time delay (even in many variables) and we have obtained global existence and uniqueness results for them. So the question arises: what makes our equations more tractable than a typical delay differential equation?
We believe that mainly two features are responsible: 1. our equations are linear while the typical delay differential equation is not, and 2., in our case, the time delay is bounded because of the assumed beginning in time and an arbitrary final time $T$ up to which we would like to solve the equation.

Our results also shed new light on a question discussed in \cite{direct_interaction_quantum}, namely: which data parametrize the solution spaces of multi-time integral equations \eqref{eq:inteq}? The conjecture that a given solution $\psi^\free$ of the free multi-time equations determines the solution $\psi$ of \eqref{eq:inteq} uniquely has turned out correct for the multi-time equations studied in this paper. We can even say more than that. As a consequence of the beginning in time and the retarded Green's functions, $\psi^\free(t_1=0,\cdot,t_2=0,\cdot)$ plays the role of initial data for $\psi$; that is, we obtain a Cauchy problem ``at the Big Bang.''

\paragraph{Outlook.} There are several interesting questions which have been left open by our work, or have been opened up by it:
\begin{enumerate}
	\item It would be desirable to treat the physically natural singular integral kernels also in $d=2,3$ (Eqs. \eqref{eq:inteq2d}, \eqref{eq:inteq3d}). It is a challenge to find a proof or disproof of existence and uniqueness of solutions of these equations. This likely requires a modification of our techniques.
	\item In the present paper, the Big Bang was only taken as a reason to introduce a lower limit for the time integrals. It would be desirable to implement it in a physically natural way instead. This would mean to formulate the integral equation on curved spacetimes with a Big Bang singularity and requires in particular to explicitly determine the Green's functions on curved spacetimes. Furthermore, we expect additional singularities of the Green's functions to appear on spacetimes with a Big Bang, as a consequence of the latter. We address this circle of questions in a subsequent paper \cite{int_eq_curved}.
	\item Physically, it would be more natural to study the case of Green's function of the Dirac equation instead of the Klein-Gordon equation. While the KG equation is normally only used as a toy model, the Dirac equation describes actual elementary particles, e.g., electrons. In the Dirac case, the Green's functions become more singular (they involve $\delta'$-functions).
	\item Finally, the case of time-symmetric Green's functions would be of great interest as the integral equation \eqref{eq:inteq} then is time reversal invariant, a property which is usually expected from fundamental physical laws. In this case, the equation does not have a  Volterra structure any more, and it becomes much harder to derive existence and uniqueness results. A beginning in time alone does not simplify the problem much; one would also need an end in time. In fact, this is also a possible cosmological scenario: the Big Crunch. It would be of interest to develop existence results for this case (see \cite{int_eq_curved}).
\end{enumerate}

\paragraph{Acknowledgments.}
We would like to thank Markus N\"oth and Shadi Tahvildar-Zadeh for helpful discussions. Special thanks go to Fioralba Cakoni for valuable advice.\\[1.5mm]
\begin{minipage}{15mm}
\includegraphics[width=13mm]{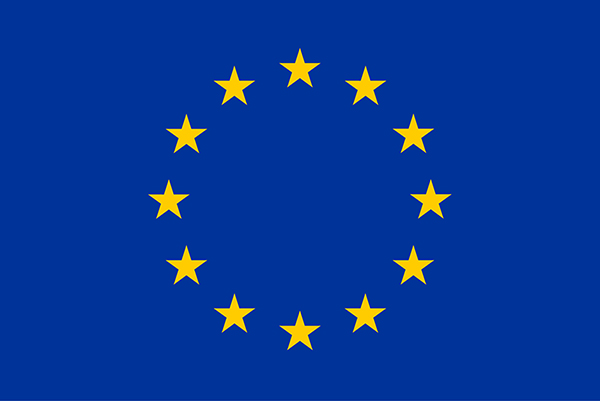}
\end{minipage}
\begin{minipage}{143mm}
This project has received funding from the European Union's Framework for Re-\\
search and Innovation Horizon 2020 (2014--2020) under the Marie Sk{\l}odowska-
\end{minipage}\\[1mm]
Curie Grant Agreement No.~705295.

\end{document}